\theoremstyle{definition}
\newtheorem{assumption}{Assumption}
\newtheorem{theorem}{Theorem}
\newtheorem{proposition}{Proposition}
\newtheorem{remark}{Remark}
\begin{document}

\title{Federated Learning via Unmanned Aerial Vehicle}
\author{Min~Fu,~\IEEEmembership{Member,~IEEE},  Yuanming~Shi,~\IEEEmembership{Senior Member,~IEEE}, 
and~Yong~Zhou,~\IEEEmembership{Member,~IEEE} 
\thanks{ M. Fu is with the Department of Electrical and Computer Engineering, National University of Singapore, Singapore 117583 (e-mail: fumin@u.nus.edu). }
\thanks{Y. Shi and Y. Zhou  are with School of Information Science and Technology, ShanghaiTech University, Shanghai 201210, China (e-mail: \{shiym, zhouyong\}@shanghaitech.edu.cn).} 
}
\maketitle
\setlength\abovedisplayskip{2pt}
\setlength\belowdisplayskip{2pt}
\setlength\abovedisplayshortskip{2pt}
\setlength\belowdisplayshortskip{2pt}

%\vspace{-6mm}
\begin{abstract}
	To enable communication-efficient federated learning (FL), this paper studies an unmanned aerial vehicle (UAV)-enabled FL system, where the UAV coordinates distributed ground devices for a shared model training.
Specifically, by exploiting the UAV's high altitude and mobility, the UAV can proactively establish short-distance line-of-sight links with devices and prevent any device from being a communication straggler.
Thus, the model aggregation process can be accelerated while the cumulative model loss caused by device scheduling can be reduced, resulting in a decreased completion time.
We first present the convergence analysis of FL without the assumption of convexity, demonstrating the effect of device scheduling on the global gradients.
Based on the derived convergence bound, we further formulate the completion time minimization problem by jointly optimizing device scheduling, UAV trajectory, and time allocation. 
This problem explicitly incorporates the devices' energy budgets, dynamic channel conditions, and convergence accuracy of FL constraints.
Despite the non-convexity of the formulated problem, we exploit its structure to decompose it into two sub-problems and further derive the optimal solutions via the Lagrange dual ascent method.
Simulation results show that the proposed design significantly improves the tradeoff between completion time and prediction accuracy in practical FL settings compared to existing benchmarks.
\end{abstract}

\begin{IEEEkeywords}
	Federated leaning, UAV communications, completion time minimization, device scheduling, and UAV trajectory design.
\end{IEEEkeywords}

\section{Introduction}
As the storage and computation capabilities of  edge devices keep growing, it becomes more attractive to process the data locally and push network computation to the edge \cite{Lim2020Edge, Shi2020edgeAI, Letaief2022Edge}.
In the field of machine learning (ML), distributed learning frameworks \cite{Li2020FLChallenges, Zhou2019Edge} that keep the training data locally are well developed to protect data privacy and reduce network energy/time costs.
Recently, federated learning (FL) \cite{Li2020FLChallenges, Zhou2019Edge, Lim2020Edge} has  been proposed as a promising solution for distributed ML, which enables multiple devices to execute local training on their own dataset and collaboratively build a shared ML model with the coordination of a parameter server (PS) (e.g., access point and base station).
Since only model parameters rather than raw data are exchanged between devices and the PS, FL significantly relieves the communication burden and protects data privacy\cite{Li2020FLChallenges,Yang2020Greedy} with wide-field applications, e.g., vehicle-to-vehicle communications \cite{Savazzi2021Opportunities} and content recommendations for smartphones\cite{Zhou2019Edge}.

In contrast to the centralized ML, the PS in FL needs to exchange models with multiple devices over hundreds to thousands of communication rounds to achieve the desired training accuracy.
However, the main challenge in realizing FL on wireless networks arises from communication stragglers with unfavorable links \cite{Li2021Delay, Imteaj2022Survey}.
For example, in over-the-air computation (AirComp)-based analog FL \cite{Yang2020Greedy, Zhu2020Broadband}, communication stragglers dominate the overall model aggregation error caused by channel fading and communication noise since the devices with better channel qualities have to reduce their transmit power for the local models' alignment at the PS.
Moreover, in digital synchronous FL \cite{Chen2021FL_TWC, Li2021Delay}, communication stragglers significantly slow down the model aggregation process and dominate cumulative communication delay since the PS must wait until receiving the training updates from all participants.
If the number of communication stragglers is high, the overall communication delay will be unacceptable.
The straggler issue is thus the main bottleneck to design communication-efficient FL systems.

There have been many efforts to mitigate the communication straggler effect in FL, such as device scheduling \cite{Yang2020Greedy, Zhu2020Broadband, Chen2021FL_TWC}.
For instance, to reduce model misalignment error incurred by stragglers in AirComp-based FL, the authors in \cite{Yang2020Greedy, Zhu2020Broadband} scheduled the devices with reliable channels for concurrent model uploading.
In addition, to reduce the communication delay incurred by stragglers in digital FL, devices with large contributions to the global model \cite{Chen2021FL_TWC} or/and with favorable channel conditions \cite{Ren2020Scheduling, Amiria2021FL_TWC} are generally selected.
Nevertheless, because such device scheduling is biased, which results in a smaller amount of training data utilized, this, in turn, may damage the update of the global model and decrease the learning performance of FL.
To alleviate such communication-learning tradeoff, recent research has investigated the integration of advanced technologies (i.e., relays \cite{Lin2022RelayFL}, reconfigurable intelligent surfaces \cite{Liu2021RISGreedy, Wang2022IRS, Yang2022differentially}) into FL systems to improve stragglers' communication qualities and thus further upgrade device scheduling policy for the reduction of communication errors.
These existing frameworks require a terrestrial BS to provide network coverage to the devices for model aggregation. 
However, many FL tasks need to be performed under the circumstances when terrestrial networks are unavailable in remote areas.
For example, devices from multiple regions (e.g., forests and woodlands) can collaborate through FL to build a learning model for fire monitoring \cite{Chen2021Satellite}.
Under these harsh environments, it is imperative to deploy a more flexible PS that proactively establishes favorable communication links among devices.

As a viable complementary alternative to terrestrial networks, unmanned aerial vehicles (UAVs) can provide coverage extension and seamless connectivity to support various FL tasks, especially in distant and underdeveloped areas \cite{Zeng2019Accessing, Lim2021UAV, Fu2021UAV}.
Inspired by this, this paper studies a UAV-enabled FL network, where a UAV is dispatched as a flying PS to aggregate and update the digital FL model parameters when no terrestrial BS is available.
To mitigate the communication straggler effect in UAV-enabled FL networks, we propose to jointly design UAV trajectory and device scheduling.
First, the qualities of communication channels between the UAV and devices still differ since all links’ channel conditions depend on the UAV’s location at each time slot.
To address this issue, device scheduling is necessary to prevent communication stragglers.
Furthermore, by utilizing its mobility, the UAV can establish short-distance LoS links to scheduled devices.
As a result, each scheduled device achieves a high data rate for model uploading, resulting in faster model uploads/downloads per round compared to the static UAV.
In addition, with its ability to dynamically adjust communication distances, the UAV can prevent any device from being a communication straggler all the time, ensuring that all devices have the opportunity to participate in FL training. 
As such, incorporated with UAV trajectory planning, the optimized device scheduling strategy focuses on data exploitation maximization, thereby reducing cumulative model aggregation loss and accelerating FL convergence.
Hence, such join design in UAV-enabled FL networks is an appealing solution to address the straggler issue and reduce communication time.

Motivated by the above observations, in this paper, we focus on the time required for completing the FL training process \cite{Dinh2021FL_TON,  Ma2021FedSA}, which includes not only the computation time but also the communication time of all scheduled devices.
The completion time is a critical design aspect in UAV-enabled systems, as UAVs usually have a limited endurance due to the practical physical constraints \cite{Zhan2019Time}.
Additionally, devices generally have a limited energy budget without battery replacement or recharging \cite{Imteaj2022Survey}.
Thus, the formulated optimization problem needs to consider not only the FL constraints such as the target convergence accuracy level but also the devices' dynamic communication conditions such as energy consumption.
As such, this paper aims to minimize the completion time of UAV-enabled FL by jointly optimizing the UAV trajectory, device scheduling, and time allocation for energy-constrained devices, while ensuring that the FL algorithm can converge to a target accuracy.

\subsection{Contributions}
The main contributions of this paper are summarized as follows.
\begin{itemize}
	\item 
	We consider a novel UAV-enabled FL framework, where a mobile UAV is dispatched as a PS to exchange the model parameters with devices, thereby preventing devices from being communication stragglers.
	Moreover, under the convergence accuracy of FL constraint, we mainly focus on the completion time minimization problem for UAV-enabled FL systems. 
	This consideration is of paramount importance for latency-critical FL applications and limited endurance UAV systems.
	
	\item  
	We first derive an upper bound on the iterative norm of the global gradients for non-convex loss functions,  taking model update errors from device selection into account.
	Specifically, the analytical bound shows that the device selection loss causes convergence rate reduction and leads to a non-diminishing gap between the initial model and the global optimum of the training loss.

	\item Based on the convergence bound, we formulate the completion time minimization problem by jointly designing device scheduling, communication time allocation for scheduled devices, and trajectory planning for UAV, taking into account the target convergence accuracy of FL, energy budget at devices, and practical constraints at the UAV. Besides, we quantitatively analyze the fundamental tradeoff between learning accuracy and communication latency, and demonstrate the importance of the mobile UAV in the proposed system.

	\item Although the formulated mixed-integer nonconvex problem is high intractable, we propose an efficient iterative block coordinate descent (BCD) algorithm to solve the joint device scheduling and time allocation optimization subproblem and the UAV trajectory optimization subproblem alternately, which is guaranteed to converge.
	Moreover, to reduce computational complexity, each optimization subproblem is solved optimally with a closed-form solution by applying the Lagrange duality method.
	
\end{itemize}

Simulation results in realistic federated settings are provided to illustrate the learning-communication tradeoff and validate the effectiveness of the proposed scheme. 
Specifically, the proposed joint design scheme outperforms existing benchmark schemes in terms of mission completion time, which is appealing in light of the limited endurance of UAVs.
Moreover, our proposed scheme provides comparable performance to the full-scheduling ideal benchmark in terms of prediction accuracy, even when the target convergence accuracy of  FL is relatively large.

\vspace{-2mm}
\subsection{Related Work}

\subsubsection{UAV Assisted FL Networks}
Upon the completion of this work, the application of UAV in FL networks was investigated in some parallel works \cite{Liu2021FLUAV, Lim2021FLUAV, Ng2021UAVFL, Yang2021UAVFL}
Specifically, the authors in \cite{Liu2021FLUAV} developed a novel FL framework with UAV swarms to improve the FL learning efficiency.
The authors in \cite{Lim2021FLUAV} proposed an FL-based sensing and collaborative learning approach for UAV-enabled internet of vehicles (IoVs), where UAVs as devices collect data and train ML models for IoVs.
In addition, \cite{Ng2021UAVFL} described using UAVs as flight relays to support wireless communication between IoVs and the FL server, thus enhancing FL accuracy.
The authors in \cite{Yang2021UAVFL} studied the deployment of multiple UAVs as flying BSs to minimize the weighted sum of FL execution time and function loss.
Nevertheless, this work does not consider device budget issues that may affect FL performance and convergence cannot be guaranteed.

\subsubsection{Latency Minimization Problems in FL Networks}
 A few works have studied the completion time minimization problems in different FL scenarios.
 The authors in \cite{Dinh2021FL_TON} formulated an FL framework over a wireless network as an optimization problem that minimizes the sum of FL aggregation latency and total device energy consumption.
 In addition, the authors in \cite{Yang2021Energy} investigated the tradeoff between the FL convergence time and devices’ energy consumption.
 However, in \cite{Dinh2021FL_TON, Yang2021Energy}, all devices are assumed to be involved in each round.

\subsection{Organization}
The remainder of this paper is organized as follows. Section \ref{Section:model} describes the FL via the UAV system model. 
In Section \ref{Section:Convergence}, we provide the convergence analysis of FL and completion time minimization problem formulation.
In Section \ref{Section:algorithm}, we propose a BCD method to solve the formulated problem.
Section \ref{Section:simulation} presents the numerical results to evaluate the performance of the proposed algorithm. 
Finally, we conclude this paper in Section \ref{Section:conclusion}.

\textit{Notations}: In this paper, scalars, column vectors and matrices are written in italic letters, boldfaced lower-case letters and boldfaced upper-case letters respectively, e.g., $a$, $\mathbf{a}$, $\mathbf{A}$. $\mathbb{R}^{M \times N}$ denotes the space of a real-valued matrix with $M$ rows and $N$ columns. 
$\left\| \mathbf{a} \right\|_2$ denotes the Euclidean norm of vector $\mathbf{a}$ and $\mathbf{a}^T$ represents its transpose. $|\mathcal{S}|$ denotes the cardinality of the set $\mathcal{S}$. $\langle \cdot, \cdot \rangle$ represents the inner product.

\section{System Model}\label{Section:model}

We consider a UAV-assisted FL network that consists of a single-antenna UAV and a set $\mathcal{K}$ of $K$ single-antenna devices, aiming to collaboratively learn an ML model (e.g., logistic regression and linear regression). 
Since the terrestrial BSs are usually sparsely deployed or unavailable in rural, remote, and underdeveloped areas, we deploy an UAV with a powerful processor to act as a PS in these harsh circumstances.

Each device $k$ has a local dataset $\mathcal{D}_{k}$ with size $D_k$. 
Local dataset $\mathcal{D}_{k}$ is a collection of data samples $\{\bm x_{ki},y_{ki}\}_{i=1}^{D_k}$, where $\bm x_{ki}\in \mathbb{R}^{d}$ is the input sample vector with $d$ features and $y_{ki}\in \mathbb{R}$ is the label for sample $\bm x_{ki}$.
The UAV-enabled FL system aims to cooperatively learn an ML model under the coordination of the UAV in an iterative manner while keeping the local training data on mobile devices.
In the sequel, we describe FL over UAV-enabled wireless networks, as summarized in Algorithm \ref{Algo:FL}.
\setlength{\textfloatsep}{0.2cm}
\setlength{\floatsep}{0.2cm}
\begin{algorithm}[t] \caption{FL over UAV-enabled wireless networks}\label{Algo:FL}
	\begin{algorithmic}[1]
		\STATE {\textbf{Input:} } $ \eta$, $\bm {w}_0 = \bm 0$, $n = 0$;
		\REPEAT
		\STATE Update $n = n+1$.
		\STATE {\textbf{Device Scheduling}}: Denote the set of scheduled devices $k$ in the $n$-th round as $\mathcal{A}_n$.
		\STATE {\textbf{Computation}}: All devices in set $\mathcal{A}_n$ receive $\bm {w}_{n-1}$ from the server, and update their local FL models denoted as $\bm {w}_{k,n}, k\in \mathcal{A}_n$ according to \eqref{Eq:local update} in parallel.
		\STATE {\textbf{Communication}}: All the scheduled devices $k\in \mathcal{A}_n$ transmit $\bm {w}_{k,n}$ to the UAV over TDMA.
		\STATE Aggregation and Feedback: The UAV updates the global model $\bm {w}_{n}$ as in \eqref{Eq:global update}, and then feed-back the updated global model to devices.
		\UNTIL{$n = N$}
		\STATE {\textbf{Output:} the global model $\bm {w}_{N}$}	 		
	\end{algorithmic}
\end{algorithm}

\subsection{FL Training Model}
Let $\bm w\in \mathbb{R}^{d}$ and $\bm w_k\in \mathbb{R}^{d}$ be the global model parameters at the UAV and the local model parameters at the $k$-th device, respectively.
The local loss function of device $k$ is defined as
\setlength\arraycolsep{2pt}
\begin{eqnarray} \label{Eq:local loss}
	\mathcal{F}_k(\bm{w}) = \frac{1}{D_k}\sum_{i = 1}^{D_k}f(\bm{w}; \bm x_{ki},y_{ki}), \forall k \in \mathcal{K}.
\end{eqnarray}
Accordingly, the global loss function at the UAV is given by
\begin{eqnarray}\label{Eq:global loss}
	\mathcal{F}(\bm w):= \frac{1}{D}\sum_{k =1}^KD_k\mathcal{F}_k(\bm{w}),
\end{eqnarray}
where $D$ denotes the total size of data with $D = \sum_{k=1}^K D_k$.
The training of an FL algorithm aims to solve the following optimization problem:
\begin{eqnarray}\label{Problem:FL minimizer}
	\mathop{\text{minimize}}_{\bm w_1,\ldots, \bm w_K}&& \frac{1}{D}\sum_{k =1}^K\sum_{i =1}^{D_k} f(\bm{w}_k; \bm x_{ki}, y_{ki})\nonumber\\
	\text{subject to} 
	&& \bm w_1 =\cdots=\bm w_K = \bm w, \label{Cons:weight}
\end{eqnarray}
where constraint \eqref{Cons:weight} is used to ensure that all devices and the server have the same model after the FL algorithm converges.
To solve problem \eqref{Problem:FL minimizer}, the training procedure of FL consists of multiple communication rounds.
Therein, in a communication round, the devices update the local models based on their own data and a common model initialization from the PS. 
After that, they contribute to the global model at the PS by uploading model updates over the wireless channel, as described below.

First, the UAV determines the set of devices that participate in the current round.
Let $a_k[n]\in\{0,1\}$ denote the scheduling variable for device $k$ in the $n$-th round.
Therein, $a_k[n] =1$ indicates that device $k$ sends its updated local model to the UAV at communication round $n$; otherwise we have $a_k[n] =0$.
$\mathcal{A}_n$ is defined as the set of scheduled devices in the $n$-th round.
Then, the UAV broadcasts the current global model, denoted by $\bm w_{n-1}$, to all the scheduled devices.

Each scheduled device $k \in \mathcal{A}_n$ receives the global model and updates its local model by applying the gradient descent algorithm \cite{Chen2021FL_TWC, Li2021Delay} on its local dataset:
\begin{eqnarray}\label{Eq:local update}
	&&\bm w_{k,n} = \bm w_{n-1} - \frac{\eta}{D_k}\sum_{(x_{ki}, y_{ki})\in \mathcal{D}_{k}}\bigtriangledown f(\bm w_{n-1};\bm x_{ki}, y_{ki}),
\end{eqnarray}
where $\eta$ is the learning rate and $\bigtriangledown f( \bm w_{n-1};\bm x_{ki}, y_{ki})$ is the gradient of $f( \bm w;\bm x_{ki}, y_{ki})$ with respect to $\bm w$ at $\bm w_{n-1}$.

After all scheduled devices upload their local models, the UAV aggregates them to obtain the following consolidated global model
\begin{eqnarray}\label{Eq:global update}
	&&	\bm w_n = \frac{\sum_{k=1}^Ka_k[n]\bm w_{k,n}}{\sum_{k=1}^Ka_k[n]},
\end{eqnarray}
where $\bm w_n$ is the global model updated at the UAV in the $n$-th round.
With \eqref{Eq:local update} and \eqref{Eq:global update}, the UAV and the devices can repeatedly update their models until the maximum number of communication rounds reaches. 
As the local FL models are updated at the scheduled devices and transmitted over wireless cellular links, device scheduling determines both the learning performance and communication delay.
Furthermore, the channel conditions between the UAV and devices vary over different time slots due to the mobility of the UAV. 
Hence, device scheduling should be optimized along with the time allocation and UAV trajectory to account for time-varying channel conditions and the target convergence accuracy required of FL.
In the next subsection, we elaborate on the communication and computation models of FL over UAV-enabled networks.

\subsection{UAV-enabled Transmission}

Without loss of generality, we consider a three-dimensional (3D) Cartesian coordinate.
Each ground device is fixed to the ground.
The horizontal coordinate of ground device $k$ is denoted as $\mathbf u_k=[\tilde{x}_k, \tilde{y}_k]\in \mathbb{R}^{1 \times 2}$ with $\tilde{x}_k$ and $\tilde{y}_k$ being $x$- and $y$-coordinates, respectively. 
We assume each ground device is known from the UAV, which flies at a fixed height $H$ m above the ground.
In practice, $H$ corresponds to the minimum altitude at which obstacles can be avoided without needing frequent aircraft ascended and descended.

In this paper, we suppose that the total number of communication rounds denoted by $N$ is fixed.
We focus on the time required for the UAV to complete the FL process, wherein the corresponding variable is denoted as $T$.
To facilitate trajectory design, we adopt the time-discretization technique to divide the mission duration $T$ into $N$ unequal time slots, wherein the $n$-th slot duration is expressed as $\delta[n]$ with $\sum_{n=1}^{N}\delta[n] = T$. 
Note that $\{\delta[n]\}$ and $T$ are variables in this paper.
We denote the location of the UAV at time slot $n$ projected on the horizontal (ground) plane as $\bm{q}[n]=[x[n], y[n]]\in \mathbb{R}^{1\times 2}$, with $x[n]$ and $y[n]$ being $x$- and $y$-coordinates at time slot $n$, respectively.
In addition, we consider the scenario where the initial location of the UAV is pre-deployed, the horizontal coordinate of which is denoted as $\bm{q}_{F} = [x_F, y_F]\in \mathbb{R}^{1\times 2}$. 
This is because the initial location corresponds to the practical mobility constraint on the UAV such as the UAV’s launching location or its pre-mission flying paths, etc.
We assume that the link between each device and the UAV is dominated by the LoS channel because UAVs usually fly at high altitudes for safety reasons, as most existing studies on UAVs \cite{Wu2018MultiUAV, Zhou2018UAVedge, Fu2021UAV}.
Following the free-space path loss model \cite{Wu2018MultiUAV}, the channel gain between device $k$ and the UAV at time slot $n$ is given by 
\begin{eqnarray}
	{h}_k[n]= \beta_0 d_k^{-2}[n] = \frac{\beta_0}{{H^2+\|\bm{q}[n]-\mathbf {u}_k\|_2^2}},
\end{eqnarray}
where $\beta_0$ represents the channel gain at the reference distance of 1 meter and $d_k[n]$ is the distance between device $k$ and the UAV at time slot $n$. 

The distance moved by the UAV in the $n$-th slot is subject to the following constraint
\begin{eqnarray}\label{Cons:Q distance}
	\|\bm q [n]\!-\!\bm q[n-1]\|_2 \!\leq\! \min\{V_{\max}\delta[n], \Delta q_{\max}\},n = 1,\ldots, N, 
\end{eqnarray}
where $V_{\max}$ is the maximum UAV speed, and $\Delta q_{\max}$ is an appropriately chosen value with $\Delta q_{\max}\ll H$ such that channel gain during each round is assumed to be unchanged.

\begin{figure}
	\centering
	\includegraphics[scale = 0.25]{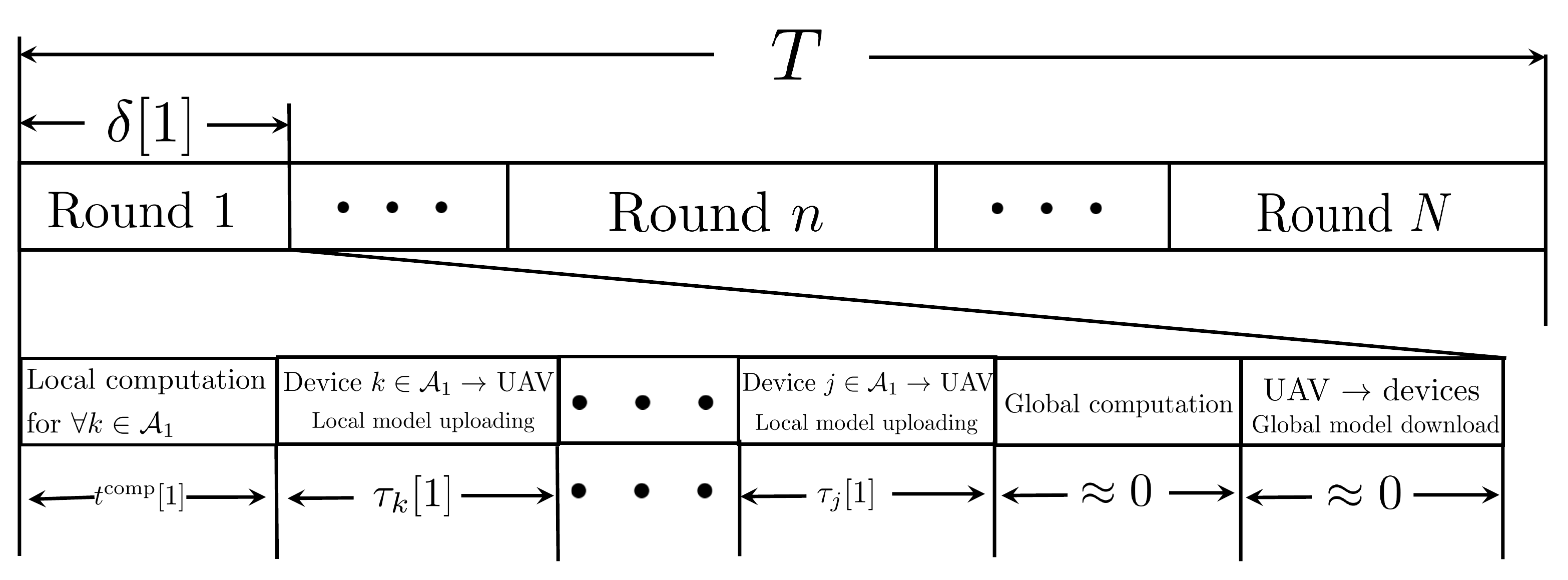}
	\vspace{-6mm}
	\caption{The illustration of UAV-enabled transmission scheme to support FL.}\label{Fig:TDMA model}
\end{figure}

Each time interval $\delta[n]$ consists of four stages, namely, the local computation stage, the local model uploading stage, the global computation stage, and the global model downloading stage, as shown in Fig. \ref{Fig:TDMA model}.
\subsubsection{Local Computation}
We denote the required number of processing cycles for computing one data sample at device $k$ by $c_k$, which can be measured offline and is known as a prior. Since all samples $\{\bm x_{ki},y_{ki}\}_{i \in \mathcal{D}_k}$ have the same size (i.e., number of bits), the number of CPU cycles required for device $k$ to run one local round is $c_k D_k$. 
By denoting the CPU-cycle frequency of the device $k$ by $f_k$, the computation time per round of device $k$ is given by
\begin{eqnarray}\label{T: Tcomp}
	t_k^{\text{comp}} = \frac{c_k D_k}{f_k}.
\end{eqnarray}
The CPU energy consumption of device $k$ for the $n$-th round of local  computation is \cite{Dinh2021FL_TON}
\begin{eqnarray}\label{E:Ecomp}
	E_{k}^{\text{comp}}[n] = a_{k}[n]\sum_{i=1}^{ c_k D_k} \frac{\alpha_k }{2} f_k^2 = a_{k}[n]\frac{\alpha_k }{2} c_k D_k f_k^2, 
\end{eqnarray}
where $\alpha_k/2$ is the effective capacitance coefficient of device $k$'s computing chipset.
Similar to \cite{Dinh2021FL_TON}, we consider the synchronous operation that requires all scheduled devices to simultaneously train their local models and complete their training before entering the communication phase.
Thus, the time cost for local model training at the $n$-th round is given by \cite{Dinh2021FL_TON}
\begin{eqnarray}\label{sumT:sumTcomp}
	t^{\text{comp}}[n] = \max_k \{a_{k}[n]t_k^{\text{comp}}\}.
\end{eqnarray}
Hence, the computing time in one local round is determined by scheduled devices with large date sizes and low CPU frequency.

\subsubsection{Local Model Uploading}
In the local model uploading stage, to avoid interference among devices during the uploading process, TDMA  \cite{Dinh2021FL_TON} is adopted as shown in Fig. \ref{Fig:TDMA model}. 
Specifically, in the local model uploading stage, $|\mathcal{A}_n|$ devices send their respective local models one by one during slot $n$.
The achievable data rate (bit/s) of devices $k$ at the $n$-th round is defined as 
\begin{eqnarray}\label{R:rate1}
	r_{k}[n] = B\log_2\Big(1+\frac{p_k[n]h_k[n]}{\sigma^2}\Big),
\end{eqnarray}
where $B$ is the system bandwidth in Hertz, $p_k[n]$ is the transmit power of device $k$ at the $n$-th round, and $\sigma^2$ is the power of the additional complex Gaussian noise.
Let $\tau_{k}[n]$ denote the duration in which device $k$ transmits its local model to the UAV at the $n$-th round. 
We assume that the dimension of the model parameter vector is fixed throughout the FL process, denoted by $s$ (in bits). 
The uploading time of device $k$ can be calculated as
\begin{eqnarray}\label{R:rate2}
	\tau_{k}[n]	= \frac{a_k[n]s}{r_k[n] }.
\end{eqnarray}
With a given data size, the larger the channel gain, the shorter the required uploading time. 
The UAV can shorten the uploading time of scheduled devices by establishing LoS connections and shortening communication distances with the devices.
By combining \eqref{R:rate1} and \eqref{R:rate2}, to transmit $s$ within a time duration $\tau_k[n]$, the device $k$'s energy consumption is
\begin{eqnarray}\label{E:Ecomm}
	E_k^{\text{comm}}[n] =  \frac{\tau_k[n]\sigma^2}{h_k[n]}\Big(2^{\frac{a_k[n]s}{B\tau_k[n]}}-1\Big).
\end{eqnarray}

After all scheduled devices upload their local training models $\{\bm w_{k,n}\}$ at the $n$-th slot, the UAV performs the global model update and broadcasts the updated result of length $s$ bits to the devices.
Let $f_0$  denote the UAV processor’s fixed computing speed, $c_0$ denote the required number of processing cycles for one local model at the UAV, and $P_0$ denote the transmit power of the UAV. 
The time that spent on updating global model and feeding the result back to devices at the UAV is expressed as
\begin{eqnarray}\label{Eq:UAV time}
	t^{\text{UAV}}[n] = \frac{Kc_0}{f_0} + \frac{s}{ \min_{k \in \mathcal{A}_n}B\log_2\Big(1+\frac{P_0h_k[n]}{\sigma^2}\Big) }.
\end{eqnarray}
One aspect, comparing to the time of local model computation at any device $k$ denoting $\frac{D_kc_k}{f_k}$, the first term in \eqref{Eq:UAV time} is negligible. 
The reasons are that the UAV has a higher computation capability $f_0$ than that of the devices $f_k$ \cite{Zhou2018UAVedge}.
Moreover, the number of devices $K$ is much smaller than the size of the local dataset $D_k$.
The other aspect, the UAV only needs to broadcast the global model to devices while all the scheduled devices upload their local models via TDMA.
Accordingly, we can infer that $t^{\text{UAV}}[n] \ll \sum_{k=1}^{K} \tau_{k}[n] + t^{\text{comp}}[n]$, and the global computation time and the global model downloading time at the UAV are neglected as in \cite{Dinh2021FL_TON}, \cite{Zhou2018UAVedge}.
Thus, the total uploading time of devices can occupy the rest of the time after local model training at the $n$-th round, i.e.,
\begin{eqnarray}\label{comp time}
	\sum_{k=1}^{K} \tau_{k}[n] \leq \delta[n] - t^{\text{comp}}[n], \forall n.
\end{eqnarray}

\section{Convergence Analysis and Problem Formulation} \label{Section:Convergence}
In this section, we conduct the convergence analysis of UAV-enabled FL, which characterizes the impact of key system parameters on the convergence performance, and then formulate a completion time minimization problem, taking into account communication resource constraints, UAV deployment constraints, and target convergence accuracy of FL.

\subsection{Convergence Analysis of FL}\label{subSection:Convergence analysis}

We follow the literature and make two standard assumptions on the loss function and local gradients as follows \cite{bottou2018optimization, Wang2019Adaptive, Allen2018natasha, Zeng2022Wirelessly, Wang2022IRS, Yu2019parallel, Liu2021Privacy}.
\begin{assumption}\label{Assume:Smoothness}
	({\it{Smoothness}}). 
	The function $\mathcal{F}$: $\mathbb{R}^{d}\rightarrow\mathbb{R}$ is $L$-{\it{smooth}}. That is, $\forall \bm w, \bm w'\in \mathbb{R}^{d}$, 
	\begin{eqnarray}
		\!\!\mathcal{F}(\bm w) \leq \mathcal{F}(\bm w') + \langle\nabla \mathcal{F}(\bm w'), \bm w - \bm w'\rangle + \frac{L}{2}\|\bm w'-\bm w\|_2^2.\label{Neq:Smooth}
	\end{eqnarray}
\end{assumption}
And the loss function does not have to satisfy the convexity assumption, and only needs to be lower-bounded, which is the minimal assumption required for convergence \cite{Allen2018natasha, Zeng2022Wirelessly}. 
\begin{assumption}\label{Assume:Bounded loss}
	({\it{Bounded Loss Function}}). 
	For any parameter vector $\bm w$, the loss function $\mathcal{F}(\bm w)$ is lower-bounded by $F^{\star}$.
\end{assumption}

The following assumption is also standard in the literature \cite{Wang2022IRS, Yu2019parallel, Liu2021Privacy}.
\begin{assumption}\label{Assume:Bounded local grad norm}
	({\it{Bounded Sample-wise Gradient Norm}}). For a constant $\kappa>0$,
	the sample-wise gradients at local devices are bounded 
	\begin{eqnarray}\label{Neq:Bounded grad norm}
		\|\nabla f(\bm w_{n};\bm x, y)\|_2^2\leq \kappa.	
	\end{eqnarray}
\end{assumption}
We use the average gradient norm as an indicator of convergence for FL, which is widely adopted in the convergence analysis for non-convex loss function \cite{reisizadeh2020fedpaq,bottou2018optimization}.
According to the above assumptions, the average gradient norm can be bound, explained as follows.
\begin{theorem}\label{theorem:convergence result}
	Suppose that the loss functions satisfy assumptions 1-3, given the learning rate $0\leq\eta \leq \frac{1}{L}$, after $N$ rounds, the average norm of the global gradients is upper bounded by
	\begin{eqnarray}\label{Neq:Convergence bound}
		\frac{1}{N}\sum_{n=0}^{N-1}\|\nabla \mathcal{F}(\bm w_{n})\|^2
		&\leq& \frac{2}{N\eta}(\mathcal{F}(\bm w_0)- F^{\star}) 
		+ \frac{4K\kappa}{ND^2}\sum_{n=0}^{N-1}\sum_{k=1}^K(1-a_k[n+1])D_k^2.
	\end{eqnarray}
\end{theorem}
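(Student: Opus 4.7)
The plan is to follow the standard descent-lemma approach for non-convex smooth optimization, adapted to absorb the bias introduced by partial device participation.

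First, I would rewrite one round of Algorithm~1 as an inexact gradient step. Substituting the local update \eqref{Eq:local update} into the aggregation \eqref{Eq:global update} yields a relation of the form $\bm w_n - \bm w_{n-1} = -\eta G_n$, where $G_n$ is the aggregated descent direction induced by the scheduled devices and $\nabla \mathcal{F}(\bm w_{n-1}) = \frac{1}{D}\sum_{k=1}^K D_k \nabla \mathcal{F}_k(\bm w_{n-1})$ is the true global gradient from \eqref{Eq:global loss}. The scheduling error is then
\[
\bm e_n \;:=\; \nabla \mathcal{F}(\bm w_{n-1}) - G_n \;=\; \frac{1}{D}\sum_{k=1}^K (1 - a_k[n])\, D_k \nabla \mathcal{F}_k(\bm w_{n-1}),
\]
which isolates exactly the contribution of the devices skipped in round $n$.

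Second, I would apply $L$-smoothness (Assumption~\ref{Assume:Smoothness}) to bound $\mathcal{F}(\bm w_n) - \mathcal{F}(\bm w_{n-1})$. The cross term $\langle \nabla \mathcal{F}(\bm w_{n-1}), G_n\rangle$ has no a priori sign, so I would split it via the polarization identity $2\langle a, b\rangle = \|a\|^2 + \|b\|^2 - \|a - b\|^2$. This produces a $-\tfrac{\eta}{2}\|G_n\|^2$ contribution that absorbs the $\tfrac{L\eta^2}{2}\|G_n\|^2$ term from smoothness whenever $\eta \leq 1/L$, leaving a non-positive coefficient on $\|G_n\|^2$ that may be discarded. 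The remaining per-round inequality is a textbook descent relation with additive error,
\[
\mathcal{F}(\bm w_n) \;\leq\; \mathcal{F}(\bm w_{n-1}) - \tfrac{\eta}{2}\|\nabla \mathcal{F}(\bm w_{n-1})\|^2 + \tfrac{\eta}{2}\|\bm e_n\|^2.
\]

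Third, I would control $\|\bm e_n\|^2$ by Cauchy--Schwarz in the form $\|\sum_{k=1}^K \bm x_k\|^2 \leq K \sum_{k=1}^K \|\bm x_k\|^2$. The key observation is that since $a_k[n] \in \{0,1\}$, the indicator $(1 - a_k[n])^2$ collapses to $(1 - a_k[n])$, so the $K$ device contributions appear linearly. Assumption~\ref{Assume:Bounded local grad norm} combined with Jensen's inequality applied to $\nabla \mathcal{F}_k(\bm w_{n-1}) = \frac{1}{D_k}\sum_i \nabla f(\bm w_{n-1}; \bm x_{ki}, y_{ki})$ gives the per-device bound $\|\nabla \mathcal{F}_k(\bm w_{n-1})\|^2 \leq \kappa$, which in turn yields an estimate of the required type, $\|\bm e_n\|^2 \lesssim \frac{K\kappa}{D^2} \sum_{k=1}^K (1 - a_k[n])\, D_k^2$.

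Finally, I would telescope the per-round inequality from $n = 1$ to $N$, invoke Assumption~\ref{Assume:Bounded loss} to replace $\mathcal{F}(\bm w_N)$ by $F^{\star}$, divide by $N\eta/2$, and reindex $n \mapsto n+1$ in the error sum to reach the stated estimate. The main obstacle is the second step: because no unbiasedness or second-moment control is assumed for $\bm e_n$, one cannot invoke a standard stochastic-gradient descent argument. The polarization identity together with the stepsize restriction $\eta \leq 1/L$ is what makes the argument go through, turning the problematic cross term into a clean difference of squared norms so that the $\|G_n\|^2$ contribution can be dropped rather than bounded above; the bounded sample-wise gradient assumption then does the remaining work of converting the scheduling-induced bias into the explicit $(1-a_k[n])D_k^2$ penalty that governs the non-vanishing part of the bound.
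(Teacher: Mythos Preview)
Your overall strategy---descent lemma via $L$-smoothness, splitting the cross term (your polarization identity plays the same role as the paper's Young-type bound $\langle a,b\rangle\le\tfrac12\|a\|^2+\tfrac12\|b\|^2$), dropping the non-positive $\|G_n\|^2$ term under $\eta\le 1/L$, and telescoping---is correct and matches the paper's Appendix~A exactly. You arrive at the same per-round inequality
\[
\|\nabla\mathcal{F}(\bm w_{n-1})\|_2^2 \;\le\; \tfrac{2}{\eta}\bigl(\mathcal{F}(\bm w_{n-1})-\mathcal{F}(\bm w_n)\bigr) + \|\bm e_n\|_2^2.
\]

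There is, however, a concrete error in your expression for $\bm e_n$. The aggregation rule \eqref{Eq:global update} divides by the \emph{scheduled} mass, not by the total $D$, so the actual descent direction is (in the paper's proof) $G_n=\dfrac{\sum_k a_k[n]D_k\nabla\mathcal{F}_k}{\sum_k a_k[n]D_k}$, and hence
\[
\bm e_n \;=\; \frac{1}{D}\sum_k D_k\nabla\mathcal{F}_k - \frac{\sum_k a_k[n]D_k\nabla\mathcal{F}_k}{\sum_k a_k[n]D_k}
\;\neq\; \frac{1}{D}\sum_k(1-a_k[n])D_k\nabla\mathcal{F}_k.
\]
Your formula would hold only if the aggregator normalized by $D$ regardless of how many devices report, which it does not. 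The re-normalization creates an additional contribution: writing $\frac{1}{\sum_k a_kD_k}-\frac{1}{D}=\frac{\sum_k(1-a_k)D_k}{D\sum_k a_kD_k}$, the error splits into a ``re-weighting'' piece on the scheduled devices plus the ``dropped'' piece you wrote. The paper bounds both via the triangle inequality (its \eqref{Neq:error1}--\eqref{Neq:error}), and the doubling this incurs is exactly the origin of the factor~$4$ in the constant $4K\kappa/D^2$. With your (incorrect) $\bm e_n$ you would obtain $K\kappa/D^2$, i.e.\ a bound four times stronger than the theorem states, which is a red flag. Once you correct the error decomposition, the rest of your argument goes through unchanged.
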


Given rounds $N$, the FL algorithm achieves an $\epsilon$-approximation solution if \cite{bottou2018optimization}
\begin{eqnarray}\label{Neq:Convergence accuracy}
	\frac{1}{N}\sum_{n=0}^{N-1}\|\nabla \mathcal{F}(\bm w_{n})\|_2^2 \leq \epsilon,
\end{eqnarray}
where $\epsilon>0$ is the convergence threshold. \begin{remark}
	As we can see, the expression \eqref{Neq:Convergence bound} comprises two terms. The first term is the initial optimality gap. For the first term, when the number of communication rounds, the learning step-size, and the initial model parameter are given, the initial optimality gap is a constant.
	The second term is the time-average aggregation error resulting from the effect of device scheduling.
	For the second term, the time-average aggregation error decreases as the number of scheduled devices increases, particularly when large data-size devices are scheduled.
	However, scheduling more devices in one round for model uploading increases the communication burden and may significantly slow down the model aggregation process, especially if there are communication stragglers. 
	Thus, it is crucial to schedule a proper subset of devices to balance training performance and time consumption.
\end{remark}

\subsection{Problem Formulation}
Let $\bm A = \{ a_{k}[n],\forall n\in\mathcal{N},\forall k\in\mathcal{K}\}$, $\bm \Gamma = \{\tau_k[n],\forall n\in\mathcal{N},\forall k\in\mathcal{K}\}$, $\bm Q = \{\bm q[n],\forall n = 0, \ldots, N \}$, and $\bm \delta = \{\delta[n],\forall n = 1, \ldots, N \}$.
In this paper, given the number of communication rounds $N$, we aim to minimize the completion time of the FL training process under the target convergence accuracy requirements by jointly optimizing the devices' scheduling variables $\bm A$, time allocation $\bm \Gamma$,  UAV trajectory $\bm Q$, and slot intervals $\bm \delta$.
To satisfy the convergence requirement in \eqref{Neq:Convergence accuracy}, we ensure that the upper bound in Theorem \ref{theorem:convergence result} is less than $\epsilon$.
Therefore, the corresponding optimization problem is formulated as 
\begin{subequations}\label{Problem: original}
	\begin{eqnarray}
		\!\!\!\!\!\!\!\! \mathop{\text{minimize}}_{\bm A, \bm \Gamma, \bm Q, \bm \delta} &&  \sum_{n=1}^N\delta[n] \nonumber \\
		\!\!\!\!\!\!\!\!	\text{subject to}
		&& a_k[n] \in \{0,1\}, \forall k \in \mathcal{K}, \forall n \in \mathcal {N}, \label{Cons:A binary} \\
		&& \sum_{n=1}^{N}\Big(E_k^{\text{comm}}[n] +   E_{k}^{\text{comp}}[n]\Big) \leq{E}_{k},  \forall k \in \mathcal{K}, \label{Cons:E budget} \\
		&& \sum_{k=1}^{K} \tau_{k}[n]  \!+\!  \max_k a_{k}[n]t_k^{\text{comp}}\leq\delta[n], \forall n \in \mathcal {N}, \label{Cons:T slot} \\
		&&\frac{2}{N\eta}(\mathcal{F}(\bm w_{0})-F^{\star})
		+\frac{4K\kappa}{ND^2}\sum_{n=0}^{N-1}\sum_{k=1}^K(1-a_k[n+1])D_k^2 
		\leq \epsilon, \label{Cons:F accuracy}\\ 
		&&\bm{q}[0] = \bm q_{F},  \label{Cons:Q position}\\
		&& \text{Constraints}~\eqref{Cons:Q distance}.
	\end{eqnarray}
\end{subequations}
Constraints \eqref{Cons:A binary} are integer constraints with respect to the devices' scheduling variables.
Constraints \eqref{Cons:E budget} represent the devices' energy budgets.
Constraints \eqref{Cons:T slot} ensure that the communication and computation time per round cannot exceed the duration of each time slot.
Constraint \eqref{Cons:F accuracy} denotes the target convergence accuracy requirement of FL.
Constraint \eqref{Cons:Q position} is subject to the initial location of the UAV. 
Problem \eqref{Problem: original} is challenging to solve due to the following reasons.
First, the optimization variables $\bm A$ for device scheduling are binary and thus \eqref{Cons:A binary}-\eqref{Cons:F accuracy} involve integer constraints. Second, $E_k^{\text{comm}}[n]$ in constraints \eqref{Cons:E budget} are not jointly convex with respect to the optimization variables $\bm A, \bm \Gamma, \bm Q$, and $\bm \delta$.
Therefore, problem \eqref{Problem: original} is a mixed-integer non-convex problem, which is difficult to be optimally solved in general.

\subsection{Problem Analysis}

In this subsection, we take a close look at problem \eqref{Problem: original} and compare our formulation with those in the existing literature.
Furthermore, we discuss properties of the considered problem formulation and emphasize the necessity of introducing flying UAVs into the FL system.

The existing works on wireless FL with static terrestrial BS \cite{Xia2020Multi, Chen2021FL_TWC, Ma2021FedSA} have made attempts to minimize the completion time under the convergence accuracy requirements by optimizing device scheduling.
Comparing our proposed formulation with the frameworks in \cite {Xia2020Multi, Chen2021FL_TWC, Ma2021FedSA}, the following differences can be seen:
\begin{itemize}
	\item Problem \eqref{Problem: original} not only considers the learning and adaptive device scheduling, but also takes into account the practical device's energy budget and dynamic channel conditions.
	Conversely, \cite{Xia2020Multi, Chen2021FL_TWC}, and \cite{Ma2021FedSA} do not study these practical communication issues.
	
	\item Given the number of communication rounds, problem \eqref{Problem: original} explicitly integrates the target convergence accuracy requirement (i.e., constraint \eqref{Cons:F accuracy}) into the completion time minimization.	
	In contrast, the formulated problems in \cite{Xia2020Multi, Chen2021FL_TWC} do not directly show the relationship between learning convergence accuracy and the completion time.	
\end{itemize}

Prior to solving problem \eqref{Problem: original}, we should check its feasibility. 
According to \eqref{Neq:Convergence bound}, when all devices are scheduled in every communication round, the convergence bound reaches the minimum value.
Therefore, if the minimum value of the convergence bound is larger than the target convergence accuracy, problem \eqref{Problem: original} is infeasible.
Moreover, if $E_k, \forall k$ is not big enough, the constraints in \eqref{Cons:E budget} may not be satisfied even though the scheduled devices consume the lowest amount of energy.
To sum up, the following proposition is given to check the feasibility of problem \eqref{Problem: original}. 
\begin{proposition} \label{proposition:ProblemFeasibility}
	Problem \eqref{Problem: original} is feasible if and only if
	\begin{eqnarray}
		N \geq \lceil\frac{2(\mathcal{F}(\bm w_{0})-F^{\star})}{\epsilon \eta}\rceil\text{~and~} \sum_{k=1}^{K} E_k\geq|\mathcal{S}|  \frac{\sigma^2 sH^2\ln 2}{\beta_0B},
	\end{eqnarray}
	where $|\mathcal{S}|$ denotes the required amount of scheduled devices during  training process, given by
	$	|\mathcal{S}| =\lceil KN-(\epsilon-\frac{2}{N\eta}(\mathcal{F}(\bm w_0)-F^{\star}))\frac{ND^2}{4K\kappa \max_kD_k^2}\rceil.
	$
\end{proposition}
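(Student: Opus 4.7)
I would prove each of the two conditions as necessary for feasibility; sufficiency follows from both inequalities admitting tight constructions. The bound on $N$ is immediate: every term $(1-a_k[n+1])D_k^2$ on the left-hand side of \eqref{Cons:F accuracy} is non-negative, so any feasible schedule forces $\tfrac{2}{N\eta}(\mathcal{F}(\bm w_0)-F^{\star})\le\epsilon$, which upon rearrangement and using $N\in\mathbb{Z}_+$ yields the stated ceiling.

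For the energy condition I proceed in two stages. First I lower bound the total number of scheduled device-slots $|\mathcal{S}|:=\sum_{n=0}^{N-1}\sum_{k=1}^K a_k[n+1]$. Rearranging \eqref{Cons:F accuracy} gives $\sum_{n,k}(1-a_k[n+1])D_k^2 \le \tfrac{ND^2}{4K\kappa}\bigl(\epsilon-\tfrac{2}{N\eta}(\mathcal{F}(\bm w_0)-F^{\star})\bigr)$; bounding each $D_k^2\le\max_j D_j^2$ converts the left side to $(KN-|\mathcal{S}|)\max_j D_j^2$, and isolating $|\mathcal{S}|$ followed by a ceiling reproduces the proposition's formula. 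Second, I lower bound the per-slot transmit energy. Viewing $E_k^{\text{comm}}[n]$ in \eqref{E:Ecomm} as a function of $\tau:=\tau_k[n]$, the tangent-line inequality $2^x-1\ge x\ln 2$ for $x\ge 0$ yields $E_k^{\text{comm}}[n]\ge \tfrac{s\sigma^2\ln 2}{B\,h_k[n]}$ for every scheduled slot, with equality attained as $\tau\to\infty$. Combined with $h_k[n]\le\beta_0/H^2$ (achieved only when $\bm q[n]=\mathbf u_k$), the per-slot transmit cost is at least $\tfrac{s\sigma^2 H^2\ln 2}{\beta_0 B}$ joules. Summing \eqref{Cons:E budget} over $k$ and discarding the non-negative computation energies $E_k^{\text{comp}}[n]$ then produces the claimed aggregate inequality.

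The main obstacle is the converse direction. The proposition aggregates the per-device budgets in \eqref{Cons:E budget} into $\sum_k E_k$, and the mobility cap \eqref{Cons:Q distance} forbids the UAV from hovering above all scheduled devices simultaneously, so a direct witness for sufficiency is not automatic. I would address this constructively: since $\delta[n]$ and $\tau_k[n]$ can be scaled up at the cost of $T=\sum_n\delta[n]$, the UAV can visit each scheduled device in turn and let every upload approach the infimum regime, so whenever both stated inequalities hold a feasible tuple $(\bm A,\bm\Gamma,\bm Q,\bm\delta)$ can always be exhibited. A fully rigorous sufficiency proof would need to quantify how the per-device budgets $E_k$ are distributed across rounds, but at the resolution of Proposition~\ref{proposition:ProblemFeasibility} the aggregated bound captures the binding limit.
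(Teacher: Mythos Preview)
Your approach matches the paper's proof in Appendix~\ref{Proposition:ProblemFeasibility proof} essentially step for step: the $N$ bound from setting all $a_k[n]=1$, the $|\mathcal{S}|$ bound via $D_k^2\le\max_jD_j^2$, the per-slot energy floor via $2^x-1\ge x\ln 2$ (the paper phrases this equivalently as $\lim_{\tau\to\infty}E_k^{\text{comm}}[n]$) together with $h_k[n]\le\beta_0/H^2$, and the final aggregation over $k$. Your explicit caveat about the sufficiency direction actually goes a bit beyond what the paper's appendix spells out.
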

\begin{proof}
	Please refer to Appendix \ref{Proposition:ProblemFeasibility proof}.
\end{proof}

We show the effect of the target convergence accuracy of FL on the completion time.
\begin{theorem} \label{theorem:ProblemOriginal}
	For problem \eqref{Problem: original}, the objective value is non-increasing with $\epsilon$.
\end{theorem}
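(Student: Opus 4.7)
The plan is to prove this by a straightforward feasible-set-monotonicity argument, since $\epsilon$ enters problem \eqref{Problem: original} only through the right-hand side of the convergence-accuracy constraint \eqref{Cons:F accuracy}. All other constraints (integrality \eqref{Cons:A binary}, energy budget \eqref{Cons:E budget}, per-slot time budget \eqref{Cons:T slot}, initial location \eqref{Cons:Q position}, and mobility \eqref{Cons:Q distance}) are independent of $\epsilon$, and so is the objective $\sum_{n=1}^N \delta[n]$. This is the clean structural property we want to exploit.

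Concretely, I would fix two accuracy levels $\epsilon_1 \leq \epsilon_2$ and denote by $\mathcal{X}(\epsilon)$ the feasible set of problem \eqref{Problem: original} associated with accuracy threshold $\epsilon$. Given any $(\bm A,\bm \Gamma,\bm Q,\bm \delta) \in \mathcal{X}(\epsilon_1)$, I would verify that \eqref{Cons:F accuracy} still holds with $\epsilon_2$ in place of $\epsilon_1$, because
\[
\frac{2}{N\eta}(\mathcal{F}(\bm w_{0})-F^{\star})
+\frac{4K\kappa}{ND^2}\sum_{n=0}^{N-1}\sum_{k=1}^K(1-a_k[n+1])D_k^2
\leq \epsilon_1 \leq \epsilon_2.
\]
All remaining constraints carry over unchanged, so $(\bm A,\bm \Gamma,\bm Q,\bm \delta) \in \mathcal{X}(\epsilon_2)$, which yields $\mathcal{X}(\epsilon_1) \subseteq \mathcal{X}(\epsilon_2)$.

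Let $T^{\star}(\epsilon)$ denote the optimal value of problem \eqref{Problem: original}. Minimizing the same objective $\sum_{n=1}^N \delta[n]$ over a larger set can only decrease the optimum, hence $T^{\star}(\epsilon_2) \leq T^{\star}(\epsilon_1)$, establishing that $T^{\star}(\epsilon)$ is non-increasing in $\epsilon$. I would also briefly note the boundary case: by Proposition \ref{proposition:ProblemFeasibility}, if $\epsilon$ is too small, both $\mathcal{X}(\epsilon_1)$ and $\mathcal{X}(\epsilon_2)$ may be empty, in which case the statement is vacuous; once $\epsilon$ crosses the feasibility threshold, the monotone relation above applies.

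I do not anticipate a real obstacle here: the argument is essentially a one-line relaxation argument, and the only subtlety is to make sure that no other quantity in the problem implicitly depends on $\epsilon$ (which it does not, since $N$, $\eta$, $\mathcal{F}(\bm w_0)$, $F^{\star}$, $D_k$, $\kappa$, $E_k$, $\bm q_F$, etc. are all problem data fixed independently of $\epsilon$). The mildest point to spell out carefully is that relaxing $\epsilon$ does not alter the admissible rounds $N$ nor force any change in $\bm w_0$, so the left-hand side of \eqref{Cons:F accuracy} is unaffected by the change of $\epsilon$; this makes the inclusion $\mathcal{X}(\epsilon_1)\subseteq\mathcal{X}(\epsilon_2)$ immediate.
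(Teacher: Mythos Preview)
Your argument is correct and essentially identical to the paper's own proof: both observe that $\epsilon$ appears only on the right-hand side of constraint \eqref{Cons:F accuracy}, so any feasible solution for a smaller $\epsilon$ remains feasible for a larger $\epsilon$, and hence the optimal objective can only decrease. The paper phrases this by taking the optimal solution at the smaller threshold and checking it is feasible at the larger one, which is exactly your feasible-set inclusion $\mathcal{X}(\epsilon_1)\subseteq\mathcal{X}(\epsilon_2)$ specialized to the optimizer.
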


\begin{proof}
	Please refer to Appendix \ref{theorem:ProblemOriginal proof}.
\end{proof}
Theorem \ref{theorem:ProblemOriginal} sheds light on the tradeoff between the completion time of FL and the target convergence accuracy of FL due to the limited energy budgets.
This is because imposing more stringent convergence accuracy requirements increases the required amount of scheduled devices in the FL training process, which can be observed from constraint \eqref{Cons:F accuracy}.
From \eqref{Problem: original}, we see that time used for the update of the local models depends on the device scheduling matrix $\bm A$ and UAV trajectory $\bm Q$.
Fortunately, by exploiting the UAV's mobility, UAV can dynamically establish favorable connections with scheduled devices in each round to mitigate the communication straggler issue, thereby reducing the completion time of the FL.

\section{Device Scheduling and Time allocation along with Trajectory Design} \label{Section:algorithm}
In this section, we present the BCD-LD method to design the device scheduling, time allocation, and UAV trajectory.
Specifically, we first optimize the device scheduling and time allocation for a given UAV trajectory.
Given device scheduling and time allocation solutions, we then optimize the UAV trajectory $\bm Q$.
To avoid the high computational complexity, we also use LD methods to derive the optimal and closed-form solutions for each optimization subproblem.

\subsection{Joint Device Scheduling and Time Allocation}
To eliminate the max function in constraints \eqref{Cons:T slot}, it is equivalently converted into
\begin{eqnarray}\label{Cons:T slot-A}
	\sum_{k=1}^{K} \tau_{k}[n]  +   a_{k}[n]t_k^{\text{comp}}\leq \delta[n], \forall k, \forall n \in \mathcal {N}.	
\end{eqnarray}
For constraints \eqref{Cons:F accuracy}, it can be rewritten as
\begin{eqnarray} \label{Cons:accuracy-A}
	\sum_{n=0}^{N-1}\sum_{k=1}^Ka_k[n+1]D_k^2 \geq C. 
\end{eqnarray}
where $C = N\sum_{k=1}^KD_k^2\!-\! (\epsilon \!-\! \frac{2}{N\eta}(\mathcal{F}(\bm w_{0})\!-\!F^{\star}) )\frac{ND^2}{4K\kappa}$ is constant.
Constraints \eqref{Cons:Q distance} with given trajectory $\bm Q$ can be reduced to
\begin{eqnarray}\label{Cons:Q speed-A}
	\delta[n] \geq \frac{\|\bm q [n]-\bm q[n-1]\|_2}{V_{\max}}, \forall n.
\end{eqnarray}

By further relaxing the integer constraints, i.e., \eqref{Cons:A binary}, Problem \eqref{Problem: original} under fixed UAV trajectory $\bm Q$ reduces to the following problem,
\begin{subequations}\label{Subproblem:scheduling and time}
	\begin{eqnarray}
		\!\!\!\!\!\!\!\!\!\!\mathop{\text{minimize}}_{\bm A, \bm \delta, \bm \tau} &&  \sum_{n=1}^N\delta[n] \nonumber \\
		\!\!\!\!\!\!\!\!\!\!	\text{subject to}
		&& \sum_{n=1}^{N}\Big(\frac{\tau_k[n] \sigma^2}{h_k[n]}\Big(2^{\frac{a_{k}[n]s}{B\tau_k[n]}}-1\Big) 
		+   a_{k}[n]E_{k}^{\text{comp}}[n]\Big) \leq {E}_{k},  \forall k \in \mathcal{K}, \label{Cons:E budget-A} \\		
		&& 0\leq a_k[n] \leq 1, \forall k \in \mathcal{K}, \forall n \in \mathcal {N}, \label{Cons:binary-relaxing}\\
		&& \text{Constraints~} \eqref{Cons:T slot-A}, \eqref{Cons:accuracy-A}, \text{and~} \eqref{Cons:Q speed-A}. \nonumber
	\end{eqnarray}
\end{subequations}
Define $\frac{\tau_k[n] \sigma^2}{h_k[n]}\Big(2^{\frac{a_{k}[n]s}{B\tau_k[n]}}-1\Big) = 0$ when $\tau_k[n] = 0, \forall k, \forall n$, such that the left-hand-sides of constraints \eqref{Cons:E budget-A} are continuous with respect to $\tau_k[n]$ with $\tau_k[n]\geq 0$.
Accordingly, since $\frac{ \sigma^2}{h_k[n]}\big(2^{\frac{a_{k}[n]s}{B}}-1\big)$
is convex with respect to $a_{k}[n]$, its perspective function 
$\frac{\tau_k[n] \sigma^2}{h_k[n]}\Big(2^{\frac{a_{k}[n]s}{B\tau_k[n]}}-1\Big)$ with $\tau_k[n]\geq0$ in constraints \eqref{Cons:E budget-A} is jointly convex with respect to $a_{k}[n]$ and $\tau_k[n]$.
Furthermore, all remaining constraints in problem \eqref{Subproblem:scheduling and time} are affine constraints.
As such, it is concluded that problem \eqref{Subproblem:scheduling and time} is convex, which can be optimally solved by the Lagrange duality method.
The partial Lagrange function of  problem \eqref{Subproblem:scheduling and time} is given by
\begin{eqnarray*}\label{Eq:Lagrange function}
	\mathcal{L}	(\bm \delta, \bm A, \bm \tau, \bm \lambda, \bm \mu, \xi) 
	&=& \sum_{n=1}^N\delta[n] 
	+ \sum_{k=1}^K \lambda_k \Big(
	\sum_{n=1}^{N}\Big(\frac{\tau_k[n] \sigma^2}{h_k[n]}\Big(2^{\frac{a_{k}[n]s}{B\tau_k[n]}}-1\Big)+   a_{k}[n]E_{k}^{\text{comp}}[n]\Big)  - {E}_{k}\!\Big)\nonumber \\
&\!\!\!\!\!\!\!\!\!\!\!\!\!\!\!\!\!\!+& \!\!\!\!\!\!\!\!\!\!\sum_{n=1}^N \sum_{k=1}^{K}\mu_{k}[n]\Big( \sum_{k=1}^{K} \tau_{k}[n] +   a_{k}[n]t_k^{\text{comp}}- \delta[n] \Big) 
	+ \xi (C-\sum_{n=0}^{N-1}\sum_{k=1}^Ka_k[n+1]D_k^2),
\end{eqnarray*}
where $\bm \lambda = \{\lambda_k, \forall k\}$, $\bm \mu = \{\mu_k[n], \forall k, \forall n\}$, and $\xi$ are the non-negative Lagrange multipliers associated with constraints \eqref{Cons:E budget-A}, \eqref{Cons:T slot-A}, and \eqref{Cons:accuracy-A}, respectively. 
The other boundary constraints in problem \eqref{Subproblem:scheduling and time} will be absorbed into the optimal solution in the following. 
Accordingly, the dual function is given by
\begin{eqnarray}\label{Subproblem:dual function}
	g(\bm \lambda, \bm \mu, \xi)	 = \mathop{\text{minimize}}_{\bm \delta, \bm A, \bm \tau, \xi}&&	\mathcal{L}(\bm \delta, \bm A, \bm \tau, \bm \lambda, \bm \mu) \nonumber \\
	\text{subject to} 
	&& \text{Constraints~} \eqref{Cons:binary-relaxing} ~\text{and}~\eqref{Cons:T slot-A}. 
\end{eqnarray}
To make $g(\bm\lambda, \bm\mu, \xi)$ lower-bounded (i.e., $g(\bm\lambda, \bm\mu, \xi)>-\infty$), the condition $\sum_{k=1}^{K}\mu_{k}[n]\leq 1$ must hold.
Therefore, the dual problem of problem 
\eqref{Subproblem:scheduling and time} is given by
\begin{subequations}\label{Subproblem:dual problem-A}
	\begin{eqnarray}
		\mathop{\text{maximize}}_{\bm \lambda, \bm \mu, \xi}&&	g(\bm \lambda, \bm \mu, \xi) \nonumber \\
		\text{subject to} 
		&&\sum_{k=1}^{K}\mu_{k}[n]>1,\forall n, \label{Cons: mu}\\
		&& \bm \lambda \succeq 0, \bm \mu \succeq 0, \xi \geq 0. \label{Cons: mu-lambda}
	\end{eqnarray}
\end{subequations}
Since the strong duality holds, we can solve problem \eqref{Subproblem:scheduling and time} by equivalently solving its dual problem \eqref{Subproblem:dual problem-A}.
In the following, we show how to obtain the dual function $g(\bm \lambda, \bm \mu, \xi)$ by solving problem \eqref{Subproblem:dual function} under any given $\bm \lambda, \bm \mu, \xi$ that satisfy \eqref{Cons: mu} and \eqref{Cons: mu-lambda}.
By ignoring the constant terms, problem \eqref{Subproblem:scheduling and time} reduces to the following formulation,
\begin{subequations}\label{Subproblem:dual function-1}
	\begin{eqnarray}
		\mathop{\text{minimize}}_{\bm \delta, \bm A, \bm \tau}&&\sum_{n=1}^N\Big(1-\sum_{k=1}^{K}\mu_{k}[n]\Big)\delta[n]
		+ \!\sum_{k=1}^K\! \lambda_k 
		\!\!\sum_{n=1}^{N}\!\!\Big(\!\frac{\tau_k[n] \sigma^2}{h_k[n]}\!\Big(\!2^{\frac{a_{k}[n]s}{B\tau_k[n]}}\!-\!1\Big)\!+ \!  a_{k}[n]E_{k}^{\text{comp}}[n]\!\Big)  \nonumber \\
		&&   + \sum_{n=1}^N \sum_{k=1}^{K}\mu_{k}[n]\Big(\sum_{k=1}^{K} \tau_{k}[n] +   a_{k}[n]t_k^{\text{comp}}\Big) 
		 - \xi \sum_{n=1}^{N}\sum_{k=1}^Ka_k[n]D_k^2 \nonumber \\
		\text{subject to} 
		&& \text{Constraints~} \eqref{Cons:binary-relaxing} ~\text{and}~\eqref{Cons:T slot-A},
	\end{eqnarray}
\end{subequations}

Note that problem \eqref{Subproblem:dual function-1} can be decomposed into two sets of subproblems, i.e., optimizing $\bm \delta$ and jointly optimizing $\bm A$ and $\bm \tau$. 
We first consider the following optimization problem for optimizing $\bm A$ and $\bm \tau$, 
	\begin{eqnarray}\label{Subproblem:Primal A and tau}
		\mathop{\text{minimize}}_{\bm A, \bm \tau} &&\sum_{n=1}^N \sum_{k=1}^{K}\mu_{k}[n]\Big(\sum_{k=1}^{K} \tau_{k}[n] +  a_{k}[n]t_k^{\text{comp}}\Big)	 \nonumber \\
		&& + \sum_{k=1}^K \lambda_k 
		\sum_{n=1}^{N}\Big(\frac{\tau_k[n] \sigma^2}{h_k[n]}\Big(2^{\frac{a_{k}[n]s}{B\tau_k[n]}}-1\Big)+  a_{k}[n]E_{k}^{\text{comp}}[n]\Big) 
	 - \xi \sum_{n=1}^{N}\sum_{k=1}^Ka_k[n]D_k^2 \nonumber \\
		\text{subject to} 
		&& \text{Constraints~} \eqref{Cons:binary-relaxing},
	\end{eqnarray}
Problem \eqref{Subproblem:Primal A and tau} is convex with respect to $\{a_k[n]\}$ and $\{\tau_k[n]\}$, so the optimal solution is the one that satisfies the Karush-Kuhn-Tucker (KKT) conditions. 
To obtain the optimal $\bm \tau$ in problem \eqref{Subproblem:Primal A and tau}, we have the following theorem.
\begin{theorem} \label{theorem:tau-solution}
By setting the first derivative of the objective function of problem \eqref{Subproblem:Primal A and tau} with respect to $\tau_k[n]$ to zeros, the optimal time allocation can be written as 
	\begin{eqnarray} \label{Eq:Solution tau}
		\tau_{k}^{\star}[n] = a_k[n]\Bigg[ \frac{s\ln2}{B(1+\mathcal{W}(\frac{h_k[n]\sum_{k=1}^{K}\mu_{k}[n]}{\lambda_k \sigma^2e}-\frac{1}{e})}\Bigg]^+\!\!\!, \forall k, \forall n,
	\end{eqnarray}	
	where $\mathcal{W}$ is the Lambert W function with $\mathcal{W}(x)e^{\mathcal{W}(x)} = x$ and $[x]^+ \triangleq \max\{x,0\}$. 
\end{theorem}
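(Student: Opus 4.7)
The plan is to carry out the first-order optimality condition for $\tau_k[n]$ in problem \eqref{Subproblem:Primal A and tau} and then recognize the resulting transcendental equation as a Lambert-$\mathcal{W}$ instance. To begin, I would first fix the scheduling variables $\{a_k[n]\}$ and isolate the terms in the objective of \eqref{Subproblem:Primal A and tau} that depend on $\tau_k[n]$ for a single pair $(k,n)$, namely the linear penalty $\bigl(\sum_{k'=1}^{K}\mu_{k'}[n]\bigr)\tau_k[n]$ coming from the slot-length multiplier, and the perspective term $\lambda_k\frac{\sigma^2}{h_k[n]}\tau_k[n]\bigl(2^{a_k[n]s/(B\tau_k[n])}-1\bigr)$ coming from the energy budget. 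Since the perspective transform preserves convexity (already argued above in the excerpt) and the other piece is linear, the reduced one-variable problem in $\tau_k[n]\ge 0$ is strictly convex, so the stationary point is the unique minimizer.

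Next I would compute the derivative. Writing $c=a_k[n]s/B$, an elementary calculation gives
\begin{equation*}
\frac{d}{d\tau}\Bigl[\tau\bigl(2^{c/\tau}-1\bigr)\Bigr] = 2^{c/\tau}\Bigl(1-\frac{c\ln 2}{\tau}\Bigr)-1 .
\end{equation*}
Setting the total derivative to zero and rearranging yields
\begin{equation*}
\Bigl(1-\tfrac{c\ln 2}{\tau}\Bigr)\exp\!\Bigl(\tfrac{c\ln 2}{\tau}\Bigr) \;=\; 1-\frac{h_k[n]\sum_{k'=1}^{K}\mu_{k'}[n]}{\lambda_k\sigma^2}.
\end{equation*}

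The key trick is then the substitution $w=\frac{c\ln 2}{\tau}-1$, which turns the left-hand side into $-w\,e^{w+1}=-e\,w\,e^{w}$. This gives $w\,e^{w}=\frac{h_k[n]\sum_{k'}\mu_{k'}[n]}{e\lambda_k\sigma^2}-\frac{1}{e}$, so by definition of the Lambert $\mathcal{W}$ function the principal branch yields $w=\mathcal{W}\!\bigl(\frac{h_k[n]\sum_{k'}\mu_{k'}[n]}{e\lambda_k\sigma^2}-\frac{1}{e}\bigr)$. Back-substituting $\tau_k[n]=\frac{c\ln 2}{1+w}=\frac{a_k[n]s\ln 2}{B(1+\mathcal{W}(\cdot))}$ produces exactly formula \eqref{Eq:Solution tau}. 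Finally, I would handle the boundary cases: when $a_k[n]=0$ the relevant term vanishes and the objective is strictly increasing in $\tau_k[n]$, forcing $\tau_k^\star[n]=0$, which is already absorbed by the multiplicative $a_k[n]$ in front; the operator $[\,\cdot\,]^+$ enforces non-negativity in degenerate multiplier configurations.

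The main obstacle I anticipate is the algebraic manipulation that reduces the stationarity condition to standard Lambert-$\mathcal{W}$ form, specifically the substitution $w=c\ln 2/\tau-1$, which is the only non-mechanical step; the convexity argument and the treatment of $a_k[n]=0$ are short, and verifying that the argument of $\mathcal{W}$ lies in $[-1/e,\infty)$ (ensuring that $1+\mathcal{W}(\cdot)\ge 0$ so $\tau_k^\star[n]$ is well defined and non-negative) is a one-line check.
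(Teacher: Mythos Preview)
Your proposal is correct and follows essentially the same route as the paper's own proof: both set the first derivative with respect to $\tau_k[n]$ to zero, rearrange into the form $\bigl(\tfrac{a_k[n]s\ln 2}{B\tau_k[n]}-1\bigr)e^{a_k[n]s\ln 2/(B\tau_k[n])}=\tfrac{h_k[n]\sum_{k'}\mu_{k'}[n]}{\lambda_k\sigma^2}-1$, and invert via the Lambert $\mathcal{W}$ function. Your write-up is slightly more detailed (explicit substitution $w=c\ln 2/\tau-1$, the convexity/uniqueness remark, and the boundary discussion for $a_k[n]=0$ and the $[\cdot]^+$ clip), but none of this departs from the paper's argument.
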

\begin{proof}
	Please see Appendix \ref{theorem:tau-solution proof}.
\end{proof}
By substituting the obtained $\{\tilde{\tau}_{k}[n]\}$ into problem \eqref{Subproblem:Primal A and tau}, we have
\begin{subequations}\label{Subproblem:Primal A}
	\begin{eqnarray}
		\mathop{\text{minimize}}_{\bm A} &&	\sum_{k=1}^K
		\sum_{n=1}^{N}g(\tilde{\tau}_{k}[n])a_k[n] \nonumber \\
		\text{subject to} 
		&& 0\leq a_k[n] \leq 1, \forall k \in \mathcal{K}, \forall n \in \mathcal {N}, 
	\end{eqnarray}
\end{subequations}
where
$
	\tilde{\tau}_{k}[n] \!=\! \frac{\tau_{k}^{\star}[n]}{a_k[n]}, 
	g(\tilde{\tau}_{k}[n]) \!=\!  \lambda_k \Big(\frac{\tilde{\tau}_{k}[n] \sigma^2}{h_k[n]}\Big(2^{\frac{s}{B\tilde{\tau}_{k}[n]}}-1\Big)+   E_{k}^{\text{comp}}[n]\Big)  
	+ \tilde{\tau}_{k}[n]\sum_{k=1}^{K}\mu_{k}[n] +  \mu_{k}[n]t_k^{\text{comp}}- \xi D_k^2.
$
Problem \eqref{Subproblem:Primal A} is a convex linear programming.
It is evident that, to minimize the objective function in \eqref{Subproblem:Primal A}, the $a_k[n]$ corresponding to $g(\tilde{\tau}_{k}[n])\leq 0$ should be set to 1.
Therefore, the optimal solution for $\{a_k[n]\}$ are thus given by the following theorem.
\begin{theorem} \label{theorem:A-solution}
	For problem \eqref{Subproblem:Primal A}, 
	the optimal device scheduling variables $\bm A$ can be expressed as
	\begin{eqnarray} \label{Eq:A-solution}
		&&a^{\star}_k[n]= 
		\left\{
		\begin{aligned}
			&1,  \text{if~} g(\tilde{\tau}_{k}[n])\leq 0, \\
			& 0, \  \text{otherwise}.	
		\end{aligned}
		\right.
	\end{eqnarray}
\end{theorem}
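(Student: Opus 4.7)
The approach is to exploit the fully separable structure of problem \eqref{Subproblem:Primal A}. First I would note that from the definition $\tilde\tau_k[n] = \tau_k^\star[n]/a_k[n]$ together with the closed form \eqref{Eq:Solution tau}, the factor of $a_k[n]$ cancels and $\tilde\tau_k[n]$ depends only on the dual variables and the fixed channel gain $h_k[n]$. Consequently each coefficient $g(\tilde\tau_k[n])$ in the objective is a constant with respect to the optimization variable, and since the constraints $0\le a_k[n]\le 1$ are separable across $(k,n)$, the LP decouples into $KN$ independent scalar problems of the form $\min_{a\in[0,1]} g(\tilde\tau_k[n])\,a$.

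Each scalar subproblem is then dispatched by inspection: a linear function on $[0,1]$ attains its minimum at the endpoint compatible with the sign of its slope. If $g(\tilde\tau_k[n])\le 0$, the objective is non-increasing in $a$ so the minimum is at $a=1$; if $g(\tilde\tau_k[n])>0$, the objective is strictly increasing so the minimum is at $a=0$. Assembling these per-index decisions yields exactly the threshold rule \eqref{Eq:A-solution}. For full rigor, one could alternatively introduce KKT multipliers $\nu_k^-[n],\nu_k^+[n]\ge 0$ for the box constraints, write stationarity $g(\tilde\tau_k[n]) - \nu_k^-[n] + \nu_k^+[n] = 0$, and verify complementary slackness in each of the two cases; both routes produce \eqref{Eq:A-solution}.

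There is no substantive obstacle in this proof. The only delicate point is the degenerate case $g(\tilde\tau_k[n])=0$, on which any $a\in[0,1]$ is optimal and the convention $a^\star_k[n]=1$ adopted in the statement is harmless for evaluating the dual function $g(\bm\lambda,\bm\mu,\xi)$. A useful observation worth recording is that the relaxed LP optimum is automatically binary off this measure-zero set, so the relaxation from \eqref{Cons:A binary} to \eqref{Cons:binary-relaxing} is essentially tight and no explicit rounding is required at this stage of the outer BCD scheme.
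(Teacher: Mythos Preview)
Your proposal is correct and follows essentially the same argument as the paper: the paper simply observes that problem \eqref{Subproblem:Primal A} is a separable linear program and that minimizing $\sum_{k,n} g(\tilde\tau_k[n])a_k[n]$ over $a_k[n]\in[0,1]$ forces $a_k[n]=1$ whenever $g(\tilde\tau_k[n])\le 0$ and $a_k[n]=0$ otherwise. Your write-up is more explicit (verifying that $\tilde\tau_k[n]$ is independent of $a_k[n]$, treating the tie case, and noting tightness of the relaxation), but the core idea is identical.
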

\begin{remark}
Theorem \ref{theorem:A-solution} shows that the dual method obtains the binary solution for device scheduling variables, which guarantees both optimality and feasibility of the original problem.
It would also make sense to select a device with a large data-set, a large channel gain, and a high computation capacity.
\end{remark}

Next, we consider the other subproblem for optimizing $\bm \delta$, which is given by
\begin{eqnarray}\label{Subproblem:Primal delta}
	\mathop{\text{minimize}}_{\bm \delta} && \sum_{n=1}^N\Big(1-\sum_{k=1}^{K}\mu_{k}[n]\Big)\delta[n] \nonumber\\
	\text{subject to}
	&& \delta[n] \geq \frac{	\|\bm q [n]-\bm q[n-1]\|_2}{V_{\max}}, \forall n.
\end{eqnarray}
It is evident that problem \eqref{Subproblem:Primal delta} is a linear programming. For problem \eqref{Subproblem:Primal delta}, since $\sum_{k=1}^{K}\mu_{k}[n]\leq1$ holds,   the optimal time slot variables are given by
\begin{eqnarray} \label{Eq:delta-solution}
	&&\delta^{\star}[n]= 
	\left\{
	\begin{aligned}
		&\frac{\|\bm q [n]-\bm q[n-1]\|_2 }{V_{\max}}, \text{if } \sum_{k=1}^{K}\mu_{k}[n]<1, \\
		& b, \  \text{if } \sum_{k=1}^{K}\mu_{k}[n]=1,
	\end{aligned}
	\right.
\end{eqnarray}	
where $b$ can be any arbitrary real number which is not smaller than $\frac{\|\bm q [n]-\bm q[n-1]\|_2 }{V_{\max}}$ since the objective function of problem \eqref{Subproblem:Primal delta} is not affected in this case. For simplicity, we set $b = \frac{\|\bm q [n]-\bm q[n-1]\|_2 }{V_{\max}}$. In general, \eqref{Eq:delta-solution} cannot provide the optimal primal solution for problem \eqref{Subproblem:scheduling and time} even with optimal dual variables.
Nevertheless, with the above proposed solutions to problems \eqref{Subproblem:Primal A and tau} and \eqref{Subproblem:Primal delta},
the dual function $g(\bm \lambda, \bm \mu, \xi)$ is obtained.

 The dual problem \eqref{Subproblem:dual problem-A} can be solved after obtaining $(\bm \tau^{\star}, \bm \delta^{\star}, \bm A^{\star})$.
%After obtaining $(\bm \tau^{\star}, \bm \delta^{\star}, \bm A^{\star})$ for given $\bm \lambda$, $\bm \mu$ $\xi$, we  solve the dual problem \eqref{Subproblem:dual problem-A} to find the optimal dual variables that maximize $g(\bm \lambda, \bm \mu, \xi)$. 
Despite the fact that the dual function $g(\bm \lambda, \bm \mu, \xi)$ is always convex by definition, it is generally non-differentiable. 
To address this challenge, a common subgradient-based method, such as projected subgradient decent method, can be applied to solving the \eqref{Subproblem:dual problem-A}.
Specifically, dual variables are updated in each iteration via
\begin{eqnarray} 
	&&\mu_k^{\frac{1}{2}}[n] = [\mu_k[n] +  \phi\sum_{k=1}^{K} \tau_{k}[n]  +   a_{k}[n]t_k^{\text{comp}}- \delta[n]]^{+}, \mu_k[n] =\frac{	\mu_k^{\frac{1}{2}}[n] }{\max\{1,\sum_{k=1}^{K}	\mu_k^{\frac{1}{2}}[n]\}},\label{Eq:dual mu update}\\
	&&	\lambda_k = [\lambda_k + \phi\sum_{n=1}^{N}\Big(\frac{\tau_k[n] \sigma^2}{h_k[n]}\Big(2^{\frac{a_{k}[n]s}{B\tau_k[n]}}-1\Big)+  a_{k}[n]E_{k}^{\text{comp}}[n]\Big) -{E}_{k}]^{+}, \label{Eq:dual lambda update}  \\ 
	&& \xi = [\xi +  \phi(C-\sum_{n=0}^{N-1}\sum_{k=1}^Ka_k[n+1]D_k^2)]^{+}, \label{Eq:dual xi update}
\end{eqnarray}
where $\phi$ is a dynamically chosen step-size. 
%The dual variables can be updated by the projected subgradient method toward the optimal solution with global convergence.
Based on $\bm \lambda^{\star}$, $\bm \mu^{\star}$, and $\xi^{\star}$ have been obtained,  it is left to find the optimal primal solution $\{\bm \delta^{\star}, \bm \tau^{\star}, \bm A^{\star}\}$ of problem  \eqref{Subproblem:scheduling and time}. 
For a convex optimization problem, a solution that minimizes the Lagrange function is optimal only if it is unique and primal feasible. 
Unfortunately, in this case, the solutions $\bm \delta$ that minimizes $\mathcal{L}(\bm \delta, \bm A, \bm \tau, \bm \lambda, \bm \mu, \xi)$ are not unique when $\sum_{k=1}^{K}\mu_{k}[n]\leq1$. 
To construct the optimal solution, additional steps are needed. 
Specifically,  given $\bm \lambda^{\star}$, $\bm \mu^{\star}$, and $\xi^{\star}$, the optimal time allocation (i.e.,  $\bm \tau^{\star}$) and device scheduling variable (i.e.,  $\bm A^{\star}$) can be uniquely obtained by \eqref{Eq:Solution tau} and \eqref{Eq:A-solution}, respectively. 
By substituting $ \bm \tau^{\star}$ and $\bm A^{\star}$ into the primal problem \eqref{Subproblem:scheduling and time}, we have 
\begin{subequations}\label{Subproblem:time}
	\begin{eqnarray}
		\mathop{\text{minimize}}_{ \bm \delta} &&  \sum_{n=1}^N\delta[n] \nonumber \\
		\text{subject to}
		&& \sum_{k=1}^{K} \tau_{k}[n]  +   a_{k}[n]t_k^{\text{comp}}\leq \delta[n], \forall k, \forall n \in \mathcal {N}, \\
		&& \delta[n] \geq \frac{	\|\bm q [n]-\bm q[n-1]\|_2}{V_{\max}}, \forall n.
	\end{eqnarray}
\end{subequations}
It is evident that  problem \eqref{Subproblem:time} is a convex linear programming and can be separated into $N$ independent problems. 
The optimal time slot variables are given by
\begin{eqnarray} \label{Eq:delta-solution1}
		\delta^{\star}[n]= \max\Big\{
	\frac{\|\bm q [n]-\bm q[n-1]\|_2 }{V_{\max}}, \sum_{k=1}^{K} \tau_{k}[n]  +  \max_k a_{k}[n]t_k^{\text{comp}}\Big\}. 
\end{eqnarray}	
Algorithm \ref{Algo:scheduling and time} summarizes the details of obtaining the optimal solution to problem \eqref{Subproblem:scheduling and time}.
Algorithm \ref{Algo:scheduling and time} includes three parts, i.e., solving problems \eqref{Subproblem:Primal A and tau} and \eqref{Subproblem:Primal delta}, updating the dual variables, and solving problem \eqref{Subproblem:time}.
According to \eqref{Eq:Solution tau} and \eqref{Eq:A-solution}, the complexity of solving problem \eqref{Subproblem:Primal A and tau} is $\mathcal{O}(KN)$.
According to \eqref{Eq:delta-solution} and \eqref{Eq:delta-solution1}, the complexity of both solving problem \eqref{Subproblem:Primal delta} and problem \eqref{Subproblem:time} is $\mathcal{O}(N)$.
In addition, according to \eqref{Eq:dual lambda update}-\eqref{Eq:dual mu update}, the complexity of updating the dual variables is $\mathcal{O}(KN)$.
Denote by $l_1$ the number of iterations required for convergence.
Therefore, the total complexity of Algorithm \ref{Algo:scheduling and time} is given by $\mathcal{O}(l_1KN)$.

\begin{algorithm}[t]
	\caption{Dual Method for Problem \eqref{Subproblem:scheduling and time}}\label{Algo:scheduling and time}
	\begin{algorithmic}[1]
		\STATE {\textbf{Input:} $K$, $N$, $\bm Q$}.
		\STATE {Initialize dual variables $\{\lambda_{k} = 1\}$, $ \{\mu_{k}[n] =  1/K\}$.}
		\REPEAT
		\STATE {Update the primal variables $\bm A$, $\bm \tau$, and $\bm \delta$ according to  \eqref{Eq:A-solution}, \eqref{Eq:Solution tau}, and \eqref{Eq:delta-solution}}.
		\STATE {Update the dual variables $\bm \lambda$,  $\bm \mu$, $\xi$ according to \eqref{Eq:dual mu update}-\eqref{Eq:dual xi update}}.
		\UNTIL{$\bm \lambda$ and $\bm \mu$ converge within a prescribed accuracy}	
		\STATE {Update $\bm \delta^{\star}$ according to \eqref{Eq:delta-solution1}}.
		\STATE {\textbf{Output:} $\bm A^{\star}$, ${\bm \tau}^{\star}$, and $\bm \delta^{\star}$.}	 		
	\end{algorithmic}
\end{algorithm}

\subsection{Trajectory Design}
Given device scheduling and time allocation $\{\bm A, \bm \delta, \bm \tau\}$, the trajectory optimization subproblem is reduced to the following feasibility checking problem
\begin{subequations} \label{Problem:Q-feasibility}
	\begin{eqnarray}
		\mathop{\text{find}}_{\bm Q} 
		&&  \bm Q\nonumber \\
		\text{subject to}
		&& \sum_{n=1}^{N}b_k[n] \|\bm q[n]-{{\bf w}_k}\|_2^2 \leq \bar{E}_{k}, \forall k \in \mathcal{K}, \label{Cons:E budget-Q} \\
		&& \text{Constraints}~\eqref{Cons:Q distance}, \eqref{Cons:Q position},
	\end{eqnarray}
\end{subequations}
where $\bar{E}_k \triangleq {E}_{k} - \sum_{n=1}^{N}a_{k}[n]E_{k}^{\text{comp}}[n]-\sum_{n=1}^{N}\frac{\tau_k[n]\sigma^2H^2}{\beta_0}\big(2^{\frac{a_k[n]s}{B\tau_k[n]}}-1\big)$ and $b_k[n]\triangleq \frac{\tau_k[n]\sigma^2}{\beta_0}\big(2^{\frac{a_k[n]s}{B\tau_k[n]}}-1\big)$.
According to \eqref{R:rate2}, $\tau_{k}^{\star}[n]$ is a non-increasing function with respect to channel gain $h_k[n]$.
Hence, to further reduce the time consumption while ensuring the feasibility of problem \eqref{Problem:Q-feasibility}, we transform problem \eqref{Problem:Q-feasibility} into the following problem with an explicit objective function
\begin{subequations} \label{Subproblem:Q-objective}
	\begin{eqnarray}
		\mathop{\text{minimize}}_{\bm Q} 
		&&  \sum_{n=1}^N\sum_{k=1}^Kb_k[n]\|\bm q[n]-{{\bf w}_k}\|_2^2\nonumber \\
		\text{subject to}
		&& \text{Constraints}~\eqref{Cons:Q distance}, \eqref{Cons:Q position}, \eqref{Cons:E budget-Q}.
	\end{eqnarray}
\end{subequations}
Although the convex QCQP problem \eqref{Subproblem:Q-objective} can be solved using a general-purpose solver through interior-point methods,  to further reduce the computational complexity, we exploit the specific structure of problem \eqref{Subproblem:Q-objective} and find its optimal solution by using a Lagrange dual ascent method in the sequel.
The partial Lagrange function of problem \eqref{Subproblem:Q-objective} can be expressed as
\begin{eqnarray}\label{Eq:Lagrange function-Q}
	\mathcal{L}	(\bm Q, \bm \gamma)
	= \sum_{n=1}^N\sum_{k=1}^K(a_k[n]+\gamma_kb_k[n])\|\bm q[n]-{{\bf w}_k}\|_2^2
	-\sum_{k=1}^{K}\gamma_k\bar{E}_{k}, 
\end{eqnarray}
where $\bm \gamma = \{\gamma_k, \forall k\}$ are the non-negative Lagrange multipliers associated with constraints \eqref{Cons:E budget-Q}. 
Accordingly, the dual function is given by
	\begin{eqnarray}\label{Subproblem:dual function-Q}
		g(\bm \gamma)	 = \mathop{\text{minimize}}_{\bm Q}&&	\mathcal{L}	(\bm Q, \bm \gamma) \nonumber \\
		\text{subject to} 
		&& \text{Constraints}~\eqref{Cons:Q distance}, \eqref{Cons:Q position}.
	\end{eqnarray}
The dual problem of problem \eqref{Subproblem:Q-objective} is given by
	\begin{eqnarray}\label{Subproblem:dual problem-Q}
		\mathop{\text{maximize}}_{\bm \gamma\succeq 0}&&	g(\bm \gamma) 
%		\text{subject to} 
%		&& \bm \gamma \succeq 0.
	\end{eqnarray}

As the strong duality holds, we can solve problem \eqref{Subproblem:Q-objective} by equivalently solving its dual problem \eqref{Subproblem:dual problem-Q}. In the following, we show how to obtain the dual function $g(\bm \gamma)$ by solving problem \eqref{Subproblem:dual function-Q} under any given $\bm \gamma \succeq 0$. 
By ignoring the constant terms, problem \eqref{Subproblem:dual function-Q} reduces to 
	\begin{eqnarray}\label{Subproblem:dual function-Q-reduced}
		\mathop{\text{minimize}}_{\bm Q}&&	\sum_{n=1}^N\sum_{k=1}^K(b_k[n]+\gamma_kb_k[n])\|\bm q[n]-{{\bf w}_k}\|_2^2\nonumber \\
		\text{subject to} 
		&& \text{Constraints}~\eqref{Cons:Q distance}, \eqref{Cons:Q position}.
	\end{eqnarray}
Since problem \eqref{Subproblem:dual function-Q-reduced} is convex with respect to $\bm Q$, the solution that satisfies KKT conditions is also optimal. 
To obtain the optimal solution $\bm Q$ in problem \eqref{Subproblem:dual function-Q-reduced}, we have the following theorem.
\begin{theorem} \label{theorem:Q-solution}
	For problem \eqref{Subproblem:dual function-Q-reduced}, the optimal UAV trajectory can be denoted as 
	\begin{eqnarray} \label{Eq:Solution Q}
	\bm q[n]&=& 
		\left\{
		\begin{aligned}
			& {\bf{q}_I}, \text{if~} n=0, \\
			&	\mathcal{P}_{\mathcal{C}} \Big(\frac{\sum_{k=1}^K(b_k[n]+\gamma_kb_k[n])({\bf {w}_k}-\bm q[n-1])}{\sum_{k=1}^K(b_k[n]+\gamma_kb_k[n])}\Big) + \bm{q}[n-1], 
			& \text{otherwise},
		\end{aligned}
		\right.
	\end{eqnarray}	
	where $\mathcal{P}_{\mathcal{C}}(\bm x) \!:=\! \min\big\{{\min\{V_{\max}\delta[n], \Delta q_{\max}\}}\big/{\|\bm x\|},1 \big\}\bm x $ is the projector associated with space $\mathcal{C}$.
\end{theorem}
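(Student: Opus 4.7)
The plan is to exploit the separability of the objective in \eqref{Subproblem:dual function-Q-reduced} across time slots and to show that, given the previous UAV location, the per-slot subproblem reduces to a Euclidean projection onto a ball. First I would rewrite the $n$-th summand using the completed-square identity
\[
\sum_{k=1}^K \beta_k[n]\|\bm q[n]-\bm w_k\|_2^2 = \Big(\sum_{k=1}^K \beta_k[n]\Big)\|\bm q[n]-\bm c[n]\|_2^2 + C_n,
\]
where $\beta_k[n] := a_k[n]+\gamma_k b_k[n]$, the weighted centroid is $\bm c[n] := \sum_k \beta_k[n]\bm w_k \big/ \sum_k \beta_k[n]$, and $C_n$ is independent of $\bm q[n]$. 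Thus each per-slot cost collapses to a positive scalar multiple of $\|\bm q[n]-\bm c[n]\|_2^2$.

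Second, because the initial position $\bm q[0]=\bm q_F$ is pinned by \eqref{Cons:Q position} and the motion bound \eqref{Cons:Q distance} couples only $\bm q[n]$ with $\bm q[n-1]$, I would proceed forward in $n$. Given $\bm q[n-1]$, the per-slot subproblem is to minimize $\|\bm q[n]-\bm c[n]\|_2^2$ subject to $\|\bm q[n]-\bm q[n-1]\|_2 \leq R[n] := \min\{V_{\max}\delta[n],\Delta q_{\max}\}$. The change of variables $\bm z := \bm q[n]-\bm q[n-1]$ turns this into the canonical projection of $\bm c[n]-\bm q[n-1]$ onto $\mathcal{C}=\{\bm z:\|\bm z\|_2\leq R[n]\}$. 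Writing the KKT conditions (a single multiplier $\nu[n]\geq 0$, stationarity $(1+\nu[n])\bm z = \bm c[n]-\bm q[n-1]$, and complementary slackness) yields two cases: (i) if $\|\bm c[n]-\bm q[n-1]\|_2 \leq R[n]$, then $\nu[n]=0$ and $\bm z = \bm c[n]-\bm q[n-1]$; (ii) otherwise $\nu[n]>0$, the constraint is active, and $\bm z = R[n]\,(\bm c[n]-\bm q[n-1])/\|\bm c[n]-\bm q[n-1]\|_2$. Both outcomes are captured uniformly by $\bm z=\mathcal{P}_{\mathcal{C}}(\bm c[n]-\bm q[n-1])$, and translating back by $\bm q[n-1]$ reproduces the recursion in \eqref{Eq:Solution Q}.

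The main obstacle I foresee is promoting this per-slot optimality to global optimality of the coupled QCQP, since the full KKT stationarity for $\bm q[n]$ (for $n<N$) involves not only the incoming multiplier $\nu[n]$ but also the outgoing multiplier $\nu[n+1]$ attached to the motion constraint entering slot $n+1$. To close this gap, I would assemble the complete multiplier sequence $\{\nu[n]\}_{n=1}^N$ from the per-slot analysis, adopt the terminal convention $\nu[N+1]=0$, and verify the joint stationarity $\bigl(\sum_k \beta_k[n]+\nu[n]+\nu[n+1]\bigr)\bm q[n] = \sum_k \beta_k[n]\bm w_k + \nu[n]\bm q[n-1] + \nu[n+1]\bm q[n+1]$ together with primal and dual feasibility and complementary slackness for all $n$. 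Since the quadratic objective and the second-order-cone motion constraints are convex, satisfying KKT is sufficient for global optimality of the candidate trajectory. A minor technicality is the degenerate case $\bm c[n]=\bm q[n-1]$, where the projection direction is undefined but $\bm q[n]=\bm q[n-1]$ is trivially optimal and the argument goes through unchanged.
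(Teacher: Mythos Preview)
The paper does not actually supply a proof of this theorem; it states the formula and remarks only that the procedure parallels the earlier Lagrange-dual development, with details omitted. So there is no paper proof to compare against beyond the implicit suggestion that KKT conditions for the convex problem \eqref{Subproblem:dual function-Q-reduced} deliver the closed form.

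Your decomposition and completed-square reduction are correct, and you have put your finger on the real issue: the motion constraint \eqref{Cons:Q distance} couples $\bm q[n]$ to both $\bm q[n-1]$ and $\bm q[n+1]$, so the joint stationarity condition for $\bm q[n]$ with $n<N$ carries the outgoing multiplier $\nu[n+1]$ as well as the incoming one. The gap is that your proposed remedy---assembling the $\nu[n]$'s produced by the forward per-slot projections and then ``verifying'' joint KKT---will not, in general, succeed. The forward greedy recursion chooses $\bm q[n]$ ignoring its effect on the feasible set for $\bm q[n+1]$, and there is no mechanism by which the resulting multipliers satisfy the coupled stationarity you wrote down. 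A one-dimensional instance makes this concrete: take $\bm q[0]=0$, $R[1]=R[2]=1$, a light centroid $\bm c[1]=-10$ with weight $1$, and a heavy centroid $\bm c[2]=10$ with weight $100$. Forward projection gives $\bm q[1]=-1$, $\bm q[2]=0$ with cost $10081$, whereas $\bm q[1]=1$, $\bm q[2]=2$ is feasible with cost $6521$. Hence the forward recursion in \eqref{Eq:Solution Q} is not the global minimizer of \eqref{Subproblem:dual function-Q-reduced}, and the joint KKT check you plan would fail on this example.

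In short, your diagnosis of the obstacle is right, but the closing step is not a verification---it is the whole difficulty, and it does not go through. What the formula \eqref{Eq:Solution Q} actually delivers is a sequential (block-coordinate in $n$) update that is feasible and monotone for the objective, which is enough for the outer BCD loop but not for the optimality claim as stated.
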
 

After obtaining $\bm Q^{\star}$ for given $\bm \gamma$, we next solve the dual problem \eqref{Subproblem:dual problem-Q} to find the optimal dual variables which maximize $g(\bm \gamma)$. 
Likewise in joint device scheduling and time allocation, the dual variable is determined by the subgradient based method, the updates of which in each iteration can be given by
%\begin{eqnarray} \label{Eq:dual variable update-Q}
$	\gamma_k = \Big[\gamma_k + \varphi\Big(\sum_{n=1}^{N}b_k[n] \|\bm q[n]-{\bf w_k}\|_2^2 - \bar{E}_{k}\Big)\Big]^+$,
%\end{eqnarray}
where $\varphi$ is a dynamically chosen step-size.
The procedure for solving problem \eqref{Subproblem:Q-objective} is similar to that of solving problem \eqref{Subproblem:scheduling and time}. The details are omitted due to space limitations.
\begin{algorithm}[t]
	\caption{BCD-LD for Problem \eqref{Problem: original}}\label{Algo:main}
	\begin{algorithmic}[1]
		\STATE {\textbf{Input:} $M$, $N$.}
		\STATE {Initialize $\bm Q^{0}$. Let $r=1$.}
		\REPEAT
		\STATE {Solve problem \eqref{Subproblem:scheduling and time} by applying Algorithm \ref{Algo:scheduling and time}  for given $\bm Q^{r-1}$}, and denote the optimal solution as $\{\bm A^{r}, \bm \tau^{r}, \bm \delta^{r}\}$.
		\STATE {Solve problem \eqref{Subproblem:Q-objective} for given $\{\bm A^{r}, \bm \tau^{r}, \bm \delta^{r}\}$, and denote the optimal solution as $\bm Q^{r}$}.
		\STATE {Update $r = r+1$}.
		\UNTIL{the fractional decrease of the objective value is below a threshod $\varepsilon$.}	
		\STATE {\textbf{Output:} $\bm A$, $\bm \tau$, $\bm \delta$, and $ \bm Q$.}	 		
	\end{algorithmic}
\end{algorithm}
\subsection{Overall Algorithm Design}
In this section, based on the results in previous two subsections, we present an efficient iterative algorithm to solve problem \eqref{Problem: original} by using the BCD technique \cite{xu2013block}, which is summarized in Algorithm \ref{Algo:main}.
Note that the subproblem for updating each block of variables is solved with optimality in each iteration.
The convergence of Algorithm $\ref{Algo:main}$ is shown in the following proposition.
\begin{proposition}\label{convergence proposition}
	With Algorithm 3, the objective value of problem \eqref{Problem: original} decreases as the number of iterations increases until convergence.
	\vspace{-2mm}	
\end{proposition}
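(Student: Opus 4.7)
The plan is to establish the monotonicity of the sequence of objective values via the standard block coordinate descent argument and then invoke the monotone convergence theorem, using the elementary lower bound $\sum_{n}\delta[n]\geq 0$. Writing the objective value produced at the end of iteration $r$ as $f^{r}:=\sum_{n=1}^{N}\delta^{r}[n]$, the goal reduces to showing $f^{r+1}\leq f^{r}$ for every $r$, after which convergence of $\{f^{r}\}$ follows from monotone boundedness.

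First I would analyse Step~4. Given the trajectory $\bm Q^{r-1}$ produced in iteration $r-1$, Algorithm~\ref{Algo:scheduling and time} returns an optimal solution $(\bm A^{r},\bm\tau^{r},\bm\delta^{r})$ of problem \eqref{Subproblem:scheduling and time}. The previous iterate $(\bm A^{r-1},\bm\tau^{r-1},\bm\delta^{r-1})$ is feasible for that subproblem, because the full tuple $(\bm A^{r-1},\bm\tau^{r-1},\bm\delta^{r-1},\bm Q^{r-1})$ was declared feasible for \eqref{Problem: original} at the end of iteration $r-1$. Optimality therefore yields $\sum_{n}\delta^{r}[n]\leq\sum_{n}\delta^{r-1}[n]$. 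Step~5 then re-optimises $\bm Q$ while freezing $(\bm A^{r},\bm\tau^{r},\bm\delta^{r})$; since $\bm\delta^{r}$ is not touched, the main objective $\sum_{n}\delta[n]$ is unchanged by this step. Composing the two, $f^{r+1}\leq f^{r}$, and $f^{r}\geq 0$ closes the monotone convergence argument.

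The part I expect to require the most care, and therefore the main obstacle, is verifying that Step~5 preserves feasibility so that the next iteration's Step~4 starts from an admissible tuple; without this the inductive claim ``$(\bm A^{r-1},\bm\tau^{r-1},\bm\delta^{r-1})$ is feasible for the subproblem with trajectory $\bm Q^{r-1}$'' used above has no base case to rest on. Two points must be checked: (i) problem \eqref{Subproblem:Q-objective} is itself feasible given $(\bm A^{r},\bm\tau^{r},\bm\delta^{r})$, and (ii) its minimiser $\bm Q^{r}$ makes $(\bm A^{r},\bm\tau^{r},\bm\delta^{r},\bm Q^{r})$ admissible for \eqref{Problem: original}. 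For (i) I would argue that $\bm Q^{r-1}$ is already a feasible point of \eqref{Subproblem:Q-objective}: it satisfies \eqref{Cons:Q position} trivially, it satisfies \eqref{Cons:Q distance} with the new slot lengths $\delta^{r}[n]$ because Step~4 enforced exactly the reformulation \eqref{Cons:Q speed-A} of that constraint at $\bm Q^{r-1}$, and it satisfies \eqref{Cons:E budget-Q} because the latter is merely the rewriting in the variables $(b_{k}^{r}[n],\bar{E}_{k}^{r})$ of the energy budget \eqref{Cons:E budget-A} that Step~4 enforced at the frozen trajectory $\bm Q^{r-1}$. Point (ii) is then immediate because $\bm Q^{r}$ is explicitly constrained by \eqref{Cons:Q distance}, \eqref{Cons:Q position} and \eqref{Cons:E budget-Q} inside \eqref{Subproblem:Q-objective}, and no other constraint of \eqref{Problem: original} involves $\bm Q$. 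With these checks done the induction closes and the monotonicity-plus-boundedness argument delivers the stated convergence.
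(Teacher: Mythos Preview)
Your proposal is correct and follows essentially the same block-coordinate-descent monotonicity argument as the paper: Step~4 cannot increase the objective because the previous iterate is feasible for the scheduling/time subproblem, and Step~5 leaves $\bm\delta$ untouched so the objective is unchanged. Your treatment is in fact more careful than the paper's on the feasibility-preservation point---the paper dispatches it in a single sentence (``a feasible solution of problem \eqref{Problem:Q-feasibility} is also feasible to problem \eqref{Subproblem:scheduling and time}'') while you spell out why $\bm Q^{r-1}$ remains feasible for \eqref{Subproblem:Q-objective} under the new $(\bm A^{r},\bm\tau^{r},\bm\delta^{r})$---and you explicitly invoke the lower bound $\sum_{n}\delta[n]\geq 0$ to conclude convergence of the sequence, which the paper leaves implicit.
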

\begin{proof}
	Please refer to Appendix \ref{convergence proposition proof}.
\end{proof}

\section{Numerical Results}\label{Section:simulation}
In this section, we present extensive numerical results to demonstrate the effectiveness of the proposed algorithm for FL via UAV.
We consider the image classification task on the widely-used CIFAR-10 datasets \cite{krizhevsky2009learning}, which is composed of 60, 000 $32 \times32$ RGB color images in 10 different classes.
We adopt the multi-nominal logistic regression model to classify the target datasets with features $d = 32\times 32\times 3\times 10$ and leaning rate $\eta = 0.01$.
And the loss function is 
\begin{eqnarray}
	\mathcal{F}_k(\bm{w})= \frac{-1}{D_k}\sum_{i = 1}^{D_k}\sum_{c= 1}^{10}1_{\{y_i = c\}}\log\frac{\text{exp}(\bm x_i^{\sf T}\bm w_c)}{\sum_{j =1}^{10}\text{exp}(\bm x_i^{\sf T}\bm w_j)}.
\end{eqnarray}
Each parameter is assumed to be stored with $I =32$ bits, i.e., $s = I \times d$.
The effective capacitance coefficient of all devices' computing chipset is $\alpha_1=\ldots =\alpha_k = 10^{-28}$.
The local training data are uniformly drawn from 50,000 images with $\{D_k\}$, and the test dataset contains the remaining 10,000 images. 
Considering different numbers of training samples, the values of $\{D_k\}$ are drawn uniformly.
In the simulations, the service area of the UAV is restricted to a square area with the size of [0, 400] m $\times$ [0, 400] m. 
The UAV is assumed to fly at a fixed altitude of $H = 100$ m, which complies with the practical rule, i.e., commercial UAVs should not fly over 400 feet (122 m) \cite{FAA2016UAV}. 
And the UAV is placed at a predetermined initial position, i.e., (200, 0, 100) m.
The maximum speed of the UAV is ${V}_{\max}=20$ m/s.
In addition, considering channel heterogeneity, $K$ devices are separated into two clusters allocated in two circles, i.e., cluster $A$ with $\lfloor0.3K\rfloor$ devices and cluster $B$ with $K-\lfloor0.3K\rfloor$ devices. 
Specifically, the devices in cluster $A$ and cluster $B$ are randomly and uniformly distributed in circles centered at (100, 100) m and (300, 300) m with a radius of 100 m, respectively.
Other parameters are summarized in Table \ref{table:setupNumerical}.
\begin{table*}[t]
	\centering
	\caption{Parameter settings for simulations.}\label{table:setupNumerical}
	{\renewcommand{\arraystretch}{1.2}\begin{tabular}{|p{3cm}|p{1.8cm}|p{2.3cm}|p{1.9cm}|p{2.5cm}|p{2.0cm}|}
			\hline
			Number of devices & $K = 40$  &	 Noise power  & $-174$ dBm/Hz&CPU frequency & $f_k = 5$ GHz \\
			\hline
			Channel gain at 1 meter & $\beta_0=-50$ dB  	& 	Uplink bandwidth & 10MHz  & Maximum distance  &  $\Delta q_{\max} = 10$ m	 \\
			\hline
			Communication rounds &  $N = 4000$  & CPU cycles & $c_k =10$ & Accuracy of Alg. \ref{Algo:main} &$\varepsilon= 10^{-3}$
			\\
			\hline
	\end{tabular}}
\end{table*}

We compare the performance of the proposed joint design scheme with the following schemes: 
%\begin{itemize}
	%\item 
	\textbf{1) Static UAV}: 
With this scheme, the UAV is placed at a predetermined initial position, i.e., (200, 0, 100) m, and remains static during the whole mission time.  
This scheme optimizes $\{\bm A, \bm \tau, \bm \delta\}$ to minimize the completion time for FL by solving problem \eqref{Subproblem:scheduling and time}, as presented in Algorithm \ref{Algo:scheduling and time}.	
	%\item 
	\textbf{2) Static UAV w/ HS}: 
	In this scheme, the UAV is static and placed at a predetermined initial position.  
	With the static UAV, the channel conditions of devices are known and invariant.
	According to the channel-aware criterion \cite{Ren2020Scheduling}, the devices with bigger channel gains are always scheduled to satisfy constraint \eqref{Cons:F accuracy} with equality.
	Therefore, in this scheme, $\bm A$ and $\bm Q$ are deterministic, and we only need to optimize $\{\bm \tau, \bm \delta\}$ to minimize the completion time for FL.	
	%\item 
	\textbf{3) Full Scheduling}: This scheme schedules all devices to upload in every communication round. And without model aggregation loss resulting from device scheduling, this scheme achieves the minimum convergence accuracy of FL, which serves as a performance bound for other schemes.
	By substituting the given $\bm A$ into problem \eqref{Problem: original}, this scheme optimizes $\{ \bm \tau, \bm \delta,\bm Q\}$ to minimize the completion time for FL. We still adopt the BCD-based iterative algorithm to solve the resulting problem, which is similar to Algorithm \ref{Algo:main}.
%\end{itemize}

%\subsection{Convergence Behaviors of Algorithm \ref{Algo:main}}

%\begin{figure}[t]
%	\centering
%	\includegraphics[width=8cm,height=6cm]{figs/Iter}
%	\caption{Convergence behaviors of Algorithm \ref{Algo:main}.}
%	\label{Fig:Iter}
%\end{figure}

\begin{figure}[t]
	\centering
	\begin{minipage}{.3\textwidth}
		\centering
		\includegraphics[width=1.1\textwidth]{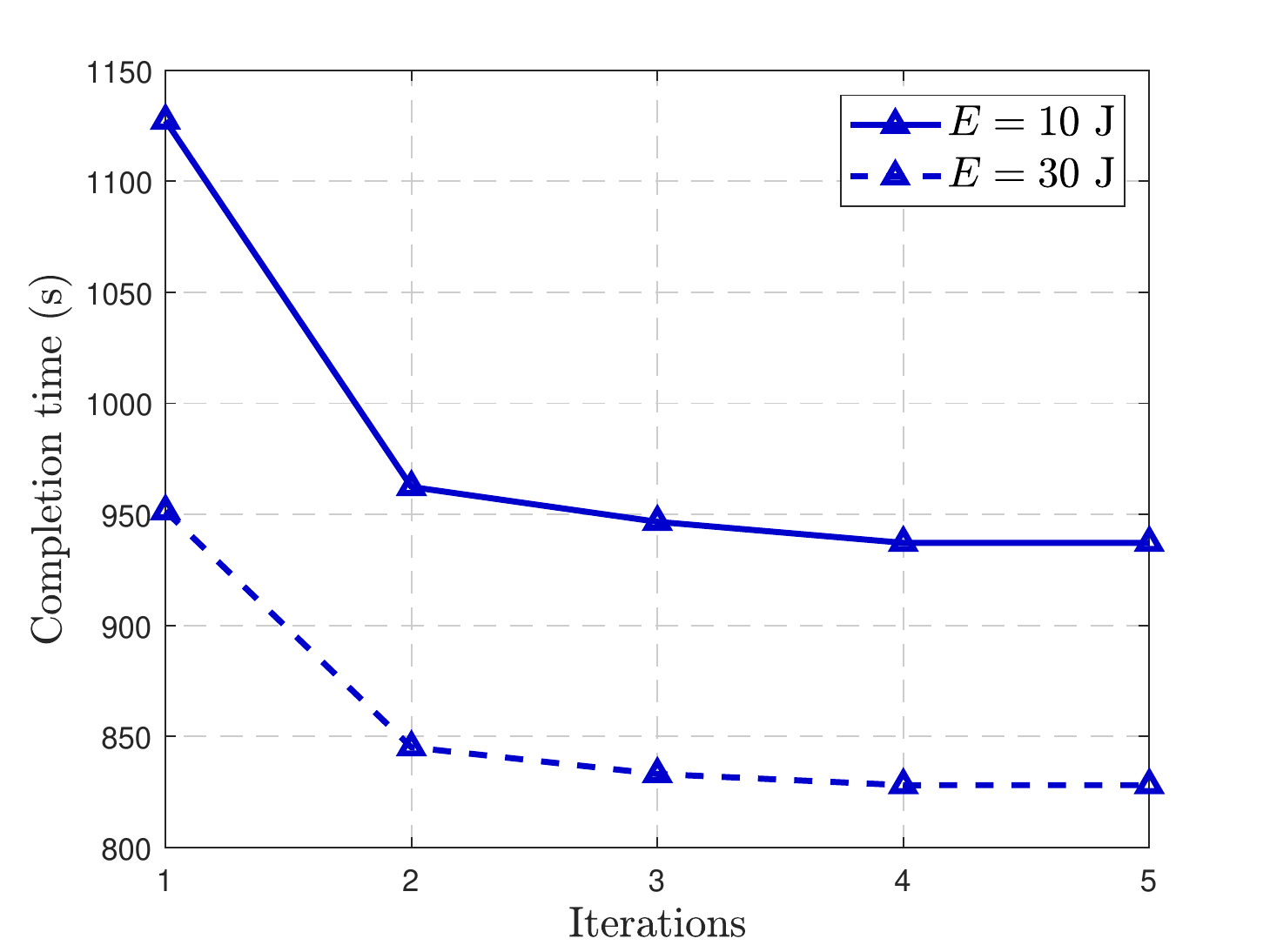}
		\vspace{-10mm}
		\caption{Convergence behaviors of Algorithm \ref{Algo:main} for different energy budgets of device.}
		\label{Fig:Iter}
	\end{minipage}
	\hspace{1mm}
	\begin{minipage}{.3\textwidth}
		\centering
		\includegraphics[width=1.1\textwidth]{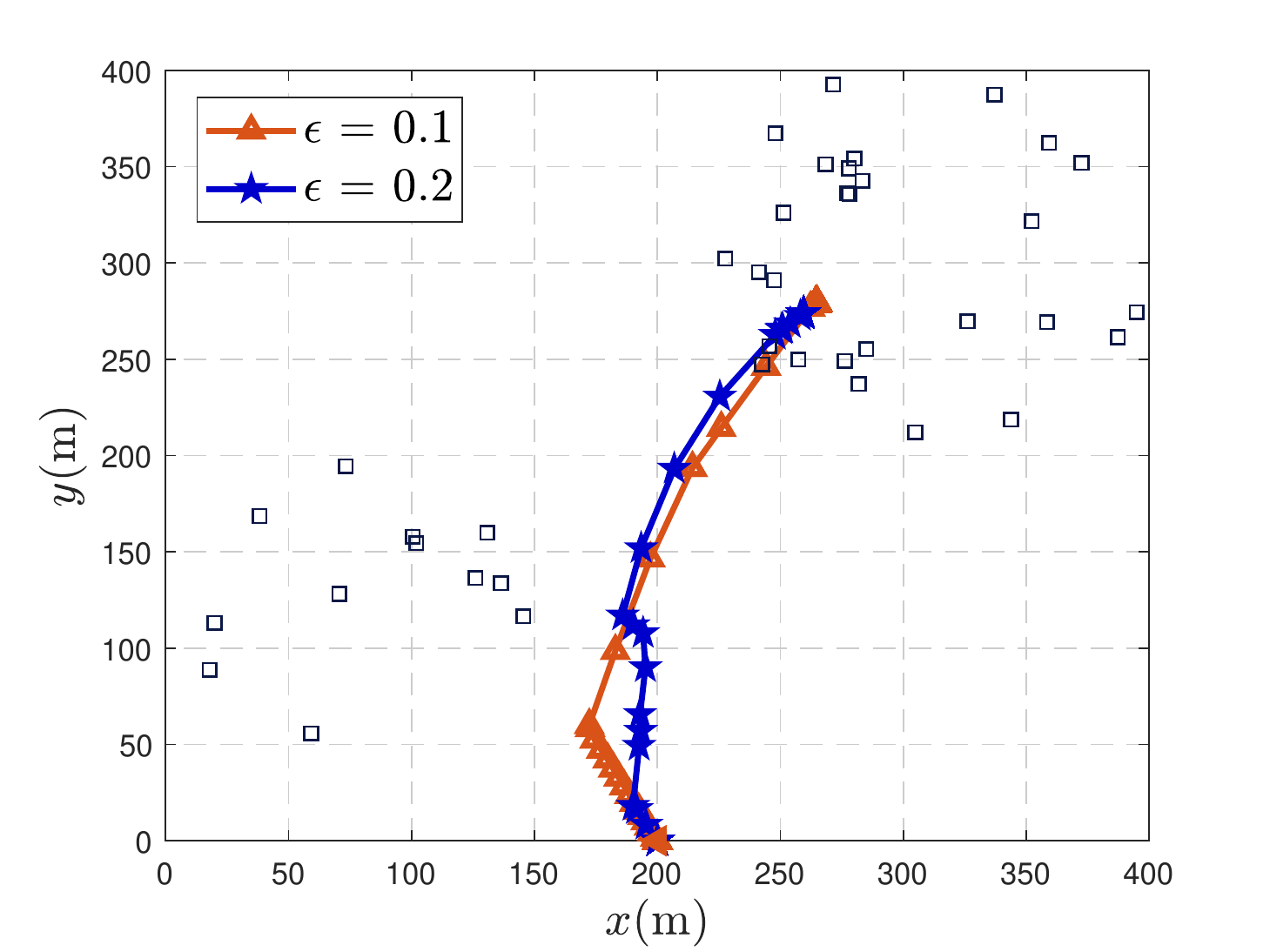}	
		\vspace{-10mm}
		\caption{UAV trajectories for different convergence accuracy of FL constraints.}
		\label{Fig:Traj}
	\end{minipage}
	\hspace{1mm}
	\begin{minipage}{.3\textwidth}
		\centering
		\includegraphics[width=1.1\textwidth]{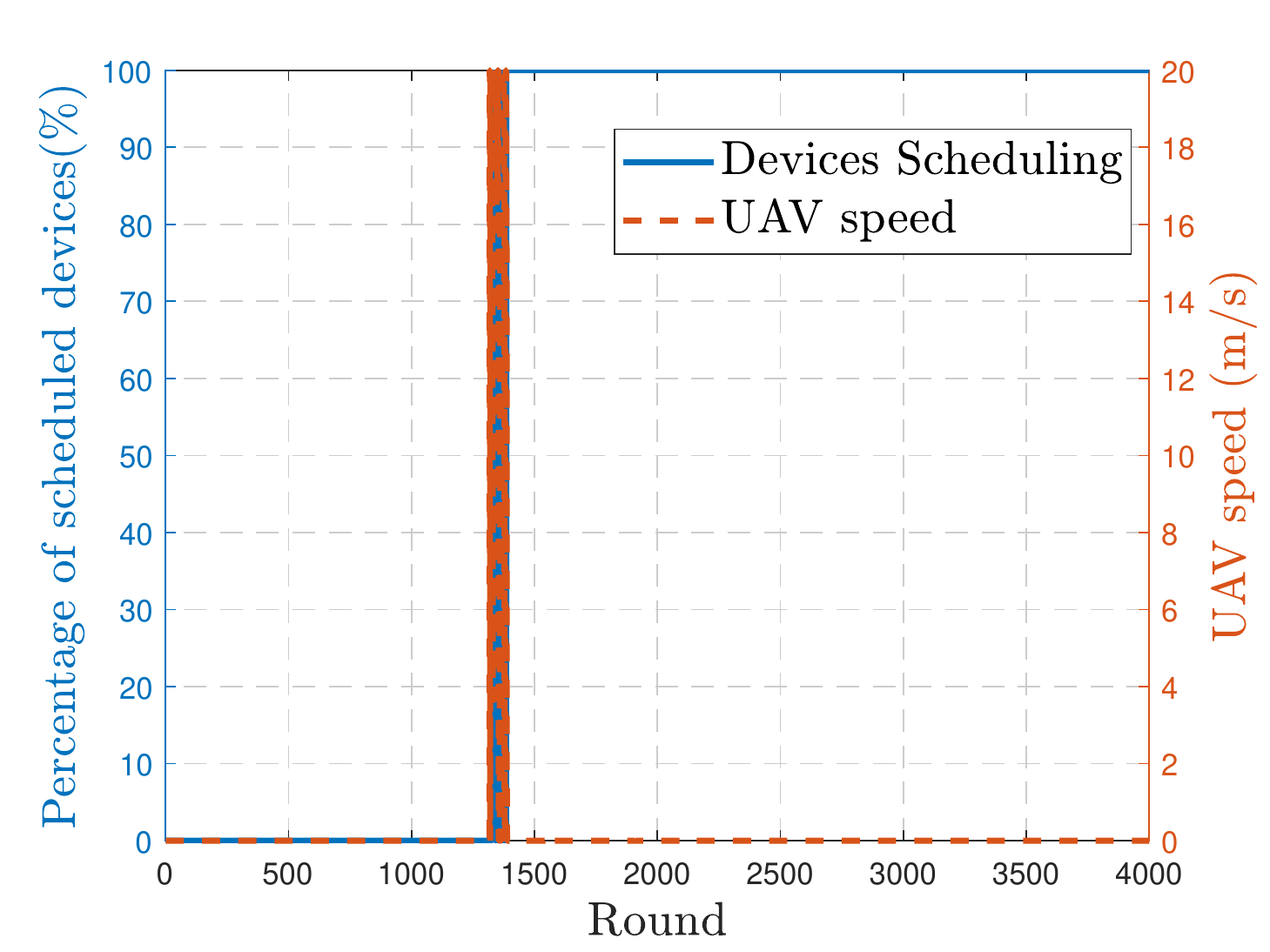}		
		\vspace{-10mm}
		\caption{The corresponding UAV speed and the percentage of scheduled devices over round when $\epsilon = 0.2$.}
		\label{Fig:speed}
	\end{minipage}
\end{figure}
\subsection{Optimized UAV Trajectory}
Fig. \ref{Fig:Iter} shows the convergence behaviors of the proposed BCD-based algorithm (i.e., Algorithm \ref{Algo:main}) with different device energy budgets $ E_k $, $\forall k$ when $\epsilon = 0.2$. 
All devices have the same energy budgets.
It is observed that for both schemes, the completion time with Algorithm \ref{Algo:main} decreases quickly with the number of iterations, and the two algorithms converge within 5 iterations with prescribed accuracy $\varepsilon= 10^{-3}$.
Besides, the proposed joint design achieves a smaller completion time under $E_k = 30$ J compared to $E_k = 10$ J.

Fig. \ref{Fig:Traj} illustrates the trajectories obtained by Algorithm \ref{Algo:main} under different convergence accuracy of FL $\epsilon$. 
Each trajectory is sampled every 5 rounds and the sampled points are marked with $\triangle$ by using the same colors as their corresponding trajectories. 
The devices' locations are marked by a dark blue $\square$. The predetermined locations of the UAV are red $\star$.
We can observe that with different $\epsilon$, the optimized trajectories have some similar behaviors.
First, the UAV visits the devices in both clusters as close as possible along an arc path, which means the UAV proactively shortens the devices' communication distance and thus schedules different devices to participate in FL training.
We also observe that the densities of sampling points on the trajectory in different time slots are different.
As a result of varying channel coefficients, the selected devices change during training.

To unveil the phenomena in trajectories obtained by these algorithms and the reasons for such phenomena, we plot the corresponding UAV speed and the percentage of scheduled devices over communication rounds, as shown in Fig. \ref{Fig:speed}. 	
It can be observed that there are three stages for behaviors of scheduled devices.
Furthermore, the behaviors of the corresponding UAV speed are consistent with that of the percentage of scheduled devices.
In the following, we elaborate on these three stages.
\begin{itemize}
	\item Stage I: As shown in Fig. \ref{Fig:speed}, from round 1 to about 1300 rounds, the percentage of participating devices per round is equal to 0, which means that no device is involved in FL training during this time interval.
	This is because in this interval when the UAV stays in the initial position, the communication straggler issue is severe.
	
	\item 
	Stage II: From about 1300 to 1400 rounds, the percentage of participating devices is dynamically changing, which means our proposed design outputs different device selection decisions when the channels change.

	\item Stage III: From about 1400 rounds to 4000 rounds, the percentage of participating devices equals $100 \%$ in each round, which means with the synergy of trajectory and energy, the optimized device scheduling not only satisfies the convergence accuracy but also maximizes data exploitation, thereby reducing model aggregation loss and accelerating FL convergence.
\end{itemize}
Importantly note that the effective communication rounds of FL in the proposed scheme only need about 2700 rounds to satisfy the convergence accuracy condition, i.e., $\epsilon = 0.2$.

%\begin{figure*}[t]
%		\centering
%		\subfigure[$\epsilon = 0.05$.]{\label{Fig:traj0.05}\includegraphics[width=0.4\textwidth]{Figs/Traj0.05.eps}}
%		\subfigure[$\epsilon = 0.2$.]{\label{Fig:traj0.2}\includegraphics[width=0.4\textwidth]{Figs/Traj0.2.eps}	}
%	\caption{Trajectory when $\epsilon = 0.2$.} \label{Fig:traj_0.2}
%\end{figure*}

\subsection{Performance Comparison for Completion Time }

\begin{figure}[t]
	\centering
	\begin{minipage}{.48\textwidth}
		\centering
	\includegraphics[width=8cm,height=6cm]{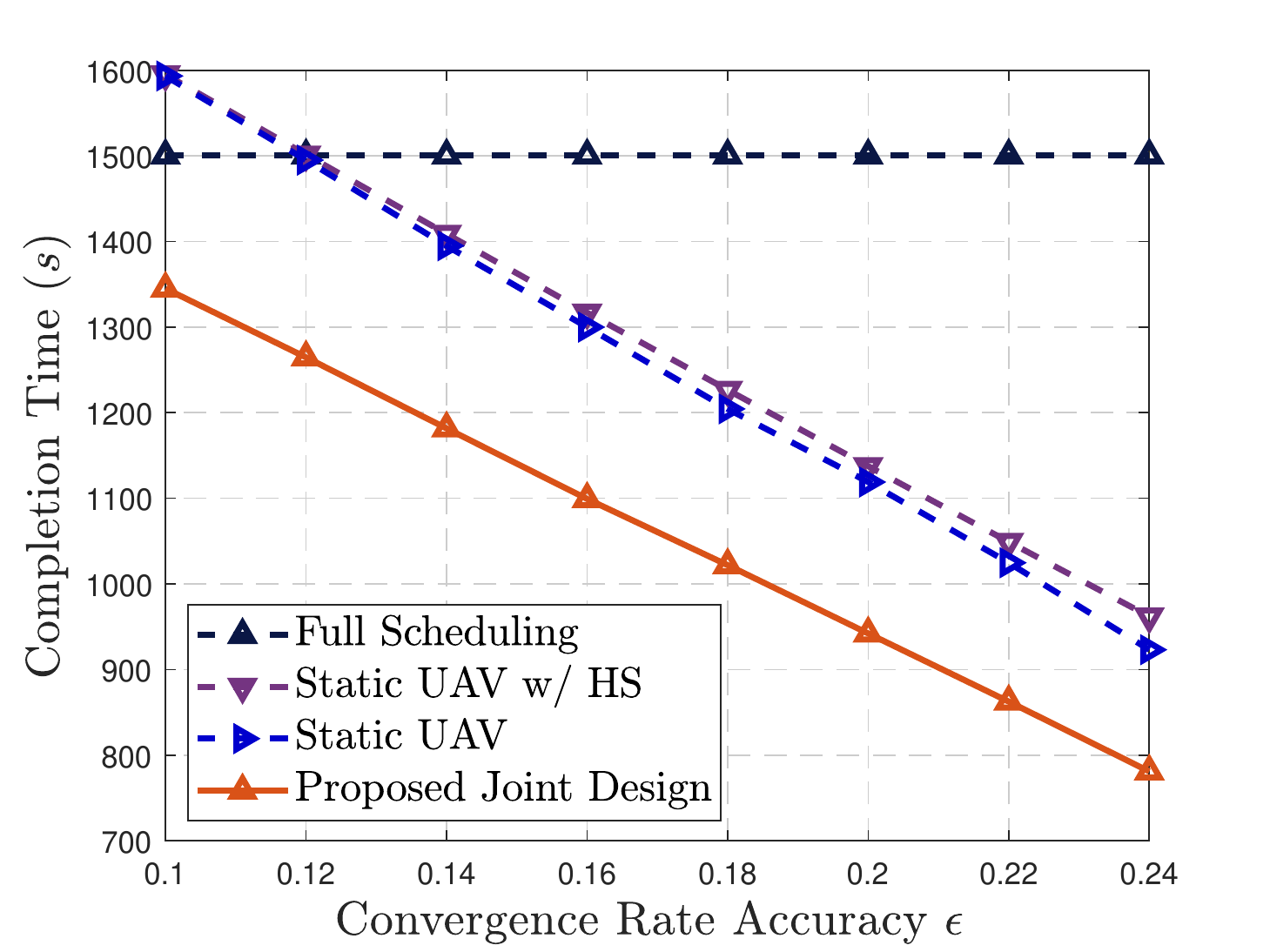}
		\vspace{-10mm}
		\caption{Completion time versus convergence accuracy of FL.}
		\label{Fig:Time_epsilon}
	\end{minipage}
	\hspace{4mm}
	\begin{minipage}{.48\textwidth}
		\centering
		\includegraphics[width=8cm,height=6cm]{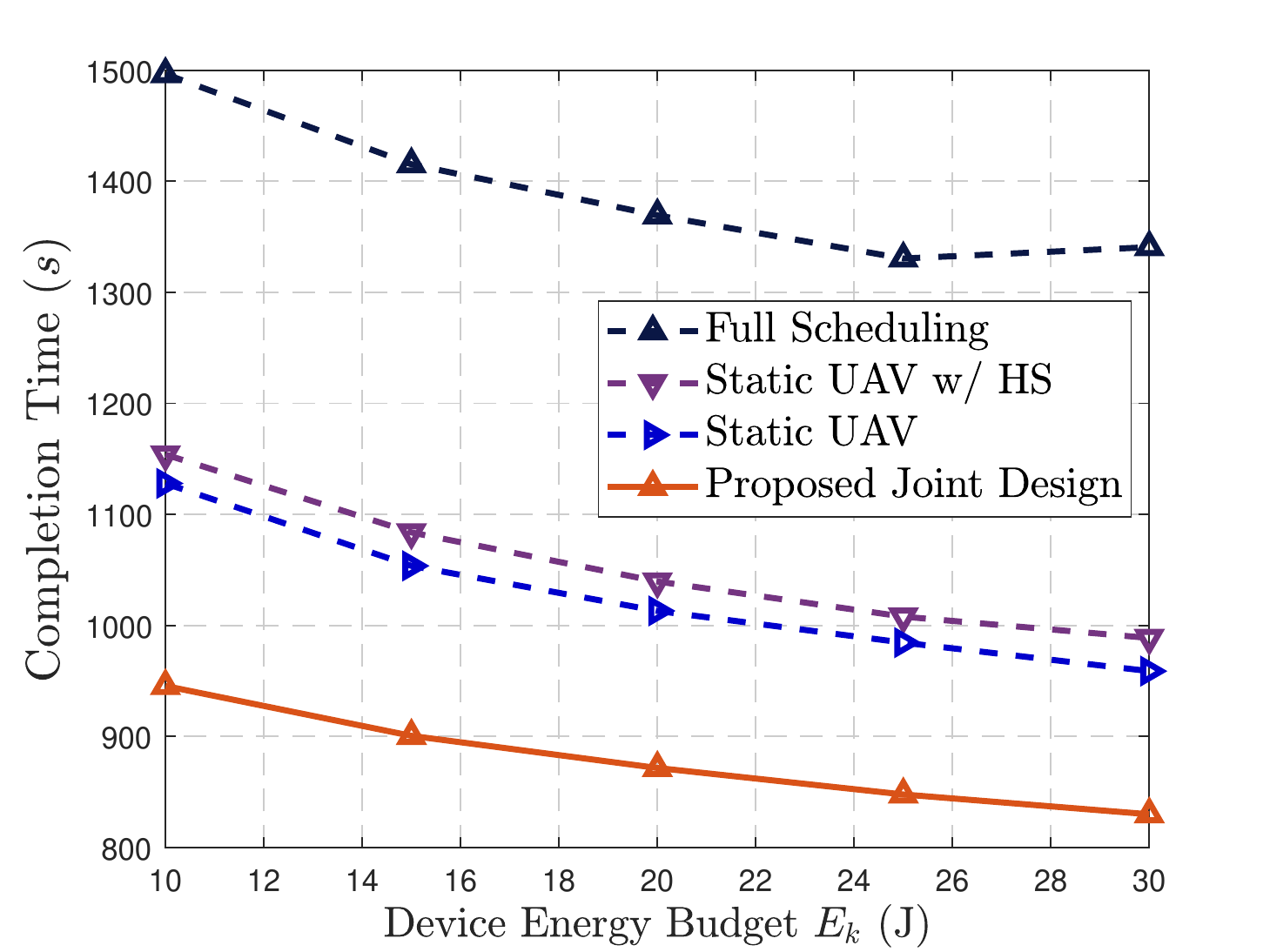}		
		\vspace{-10mm}
			\caption{Completion time versus device energy budget.}
		\label{Fig:Energy_time}
	\end{minipage}
\end{figure}

%\begin{figure}[t]
%\centering
%\includegraphics[width=8cm,height=6cm]{figs/epsilon_time}
%\caption{Completion time versus convergence accuracy of FL.}
%\label{Fig:Time_epsilon}
%\end{figure}
%		

%\begin{figure}[t]
%	\centering
%	\includegraphics[width=8cm,height=6cm]{Figs/Energy_time}
%	\caption{Completion time versus device energy budget.}
%	\label{Fig:Energy_time}
%\end{figure}

Fig. \ref{Fig:Time_epsilon} shows the effect of the convergence accuracy of FL on the completion time when $E = 10$ J.
Since the full scheduling scheme does not result in the model aggregation loss, its performance is invariant to the convergence accuracy $\epsilon$. 
However,  this scheme costs prohibitive completion time without device scheduling.
Note that this scheme may be forbidden in general since UAVs usually have a limited endurance due to practical physical constraints (e.g., 30 minutes for the typical rotary-wing UAV \cite{Wu2018MultiUAV}).
For the other three schemes with device scheduling, the completion time decreases with $\epsilon$, which is consistent with Theorem \ref{theorem:ProblemOriginal}.
In addition, the Static UAV scheme has a shorter completion time than the Static UAV w/ HS scheme since the former avoids the aggregation error explosion by excluding the weak devices.
However, in the case of strong straggler issues, the performance gain brought by device scheduling optimization alone is very limited.
Fortunately, the proposed joint design scheme incorporated trajectory design with device scheduling significantly reduce the completion time compared to all benchmarks.
This is because, the mobile UAV can shorten communication distances between the scheduled devices and always achieve a high transmission rate for local model uploading, thereby accelerating the model aggregation process.	
These results demonstrate in UAV-enabled systems, joint trajectory and device scheduling device is critical to mitigating the communication stragglers' effect.

Fig. \ref{Fig:Energy_time} shows the total mission completion time versus the device energy budget when $\epsilon = 0.2$.
It is observed that, for all schemes, the total completion time decreases as $E$ increases.
This is because, as $E$ increases, the scheduled devices can afford more energy to upload local models with higher transmission rates.
We also find that the proposed joint design achieves the smallest completion time compared to all benchmarks.
This observation further verifies the superiority of the proposed scheme to mitigate the stragglers' effect.

%Besides, the Static UAV scheme achieves a smaller completion time compared to the Static UAV w/ HS.
%This is because Static UAV w/ HS always chooses to schedule the devices with higher channel gains and excludes other devices with worse channel during the whole FL training whereas the Static UAV scheme adaptively schedules different devices according to the energy usages of devices.
%Obviously, to minimize completion time, utilizing the energy of all devices is more effective compared to wasting the energy of excluded devices.
%Furthermore, compared the proposed joint design to the static UAV scheme, Fig. \ref{Fig:Energy_time} shows that UAV mobility indeed provides performance gain over a static UAV, as demonstrated in Fig. \ref{Fig:Time_epsilon}.		

\begin{figure}[t]
	\centering
	\begin{minipage}{.48\textwidth}
		\centering
		\includegraphics[width=8cm,height=6cm]{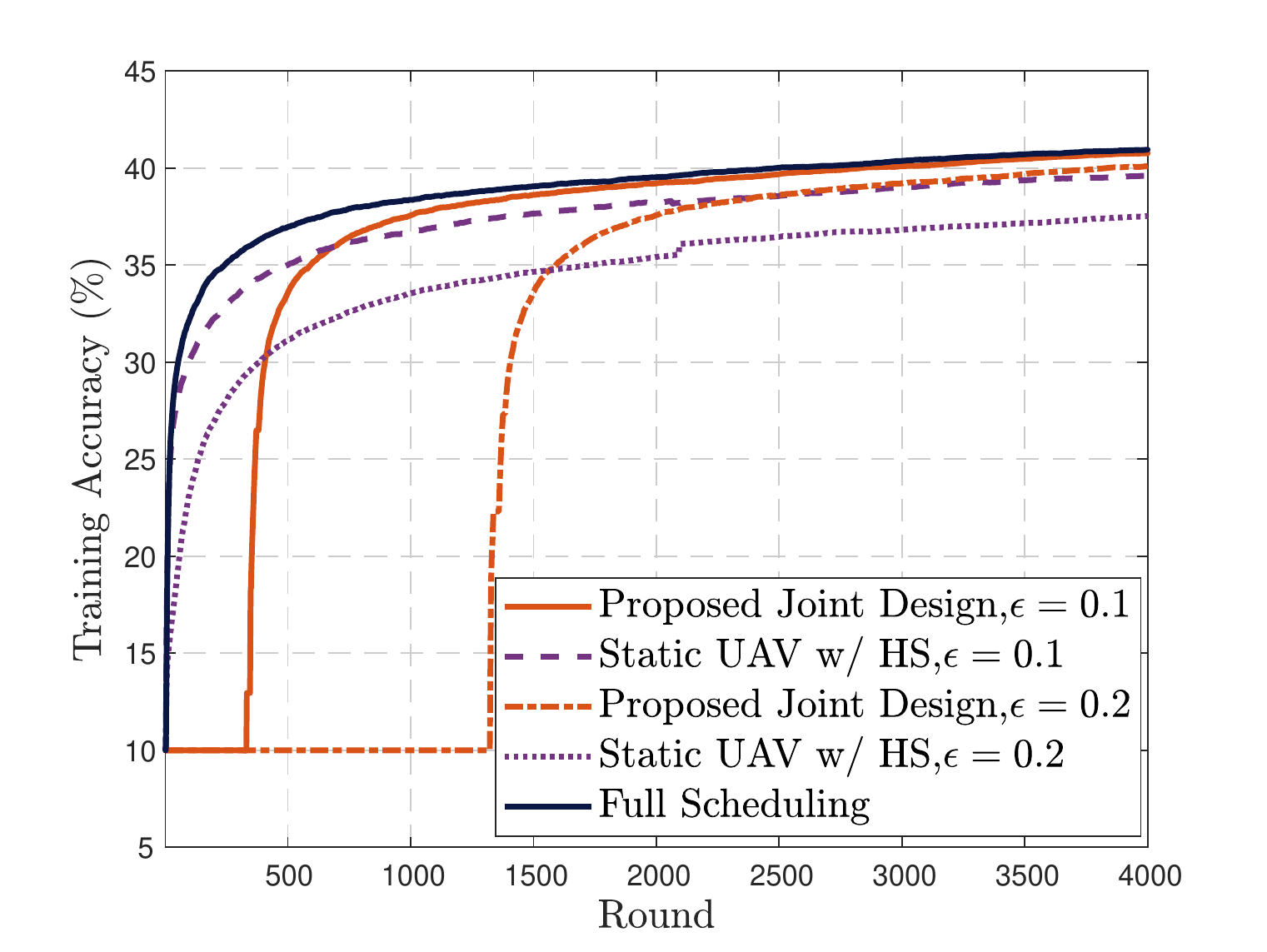}
		\vspace{-10mm}
		\caption{Training accuracy over communication rounds.}
		\label{Fig:Train_acc}	
	\end{minipage}
	\hspace{4mm}
	\begin{minipage}{.48\textwidth}
		\centering
		\includegraphics[width=8cm,height=6cm]{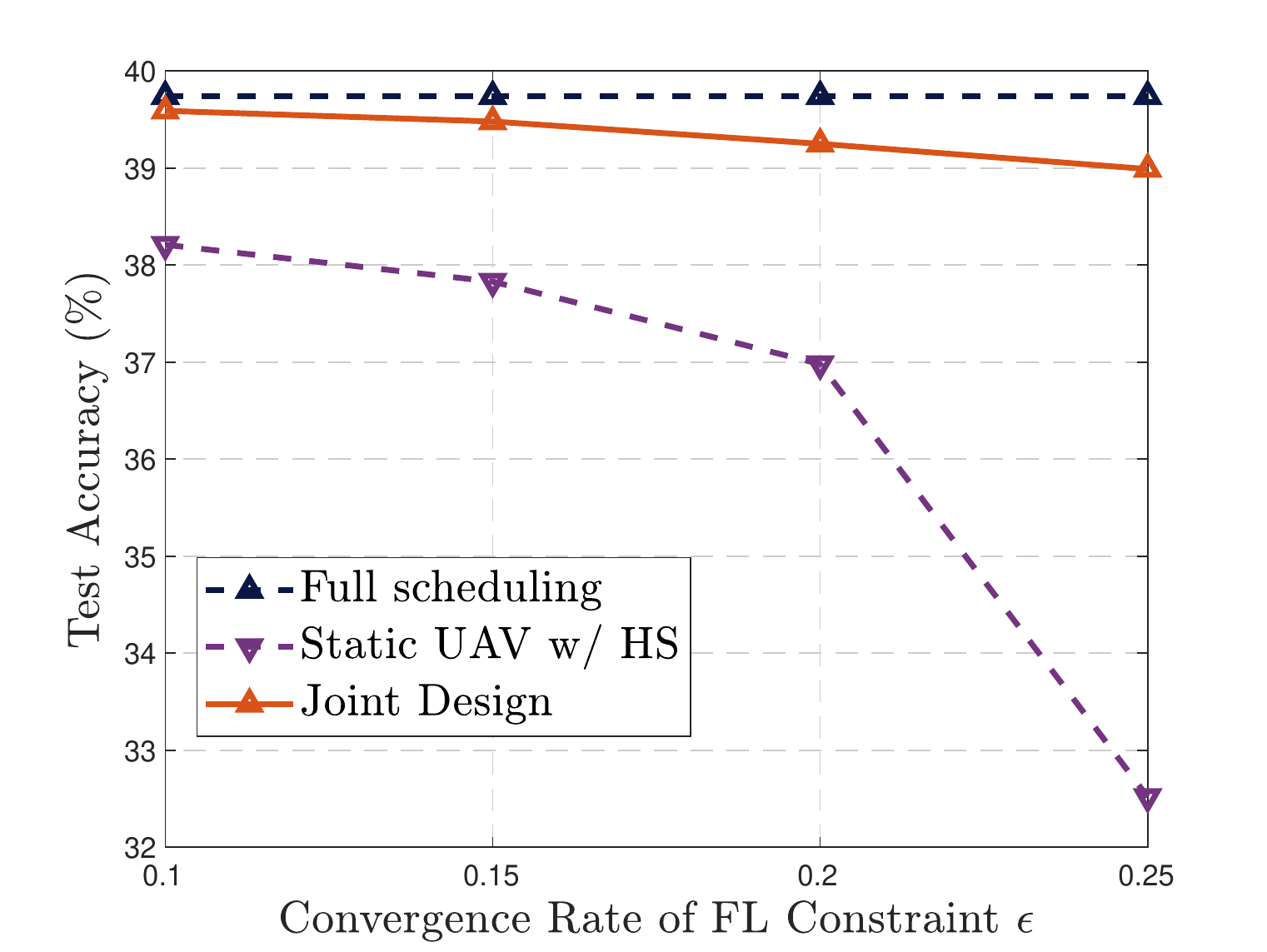}
		\vspace{-10mm}
		\caption{Test accuracy versus convergence accuracy of FL.}
		\label{Fig:Test_acc}
	\end{minipage}
\end{figure}

\subsection{Performance Comparison for FL}
To demonstrate the performance of our proposed joint design for dealing with FL tasks, we train image classifier models on the widely-used CIFAR-10 dataset.
In the simulations, the full scheduling scheme allows all devices to participate in the training process at each FL round that can achieve the most optimal FL performance, which serves as the lower bound.

Fig. \ref{Fig:Train_acc} shows training accuracy over communication rounds with different schemes and different $\epsilon$.
We can observe that the convergence accuracy $\epsilon$ heavily affects the learning performance of the Static UAV w/ HS since a larger value of $\epsilon$ implies that the data are less exploited in this scheme due to the fact of device exclusion.
In contrast, the proposed joint design still achieves a training accuracy that is close to the ideal benchmark even as $\epsilon$ increases.
For example, we can see from Fig. \ref{Fig:speed} with $\epsilon = 0.2$, the percentage of participating devices is $100 \%$ from about 1400 rounds to 4000 rounds.
This is because the proposed joint design scheme overcomes the straggler issue and maximizes data exploitation.
Specifically, the UAV's trajectory tends to be close to the devices to establish strong communication links, thus maximizing data exploitation.
In contrast, at some UAV locations (e.g., the initial position), where the communication straggler issue is severe, the device schedule is optimized to let no device be involved in FL training.
This is the reason why the curve of the proposed joint design scheme is invariant during the initial communication rounds.
On the other hand, by properly designing the trajectory of the UAV, short-distance LoS links can be proactively and dynamically established for any device, which ensures all devices have equal opportunity to participate in FL training instead of device exclusion, thereby increasing the data exploitation and training accuracy.

%First, as $\epsilon$ increases, for the proposed joint scheme and static UAV w/HS, the training accuracy reduces.
%Nevertheless, compared to static UAV w/HS, the reduction in our proposed program is significantly lower due to the effective structure of optimized device scheduling.
%In the following, we take $\epsilon = 0.2$ as an example for explanation.
%From Fig. \ref{Fig:Train_acc}, between 1-th round and about 1300-th rounds, we observed that the training accuracy is unchanged since during this interval there is no device scheduled to participate in the FL training process, as shown in Fig. \ref{Fig:speed}.
%It is observed from Fig. \ref{Fig:speed} and Fig. \ref{Fig:Train_acc}, from 1300 rounds, as there are some devices scheduled to participate in FL training, the training accuracy increases significantly until training ends.
%Since our proposed joint design scheme schedules all devices to participate in FL training, FL adopts all training data, thereby improving the training accuracy of FL.
%This demonstrated the superiority of our proposed scheme in reducing completion time and guaranteeing the training accuracy of FL.

%\begin{figure}[t]
%	\centering
%	\includegraphics[width=8cm,height=6cm]{Figs/Test_acc}
%	\caption{Test accuracy versus convergence accuracy of FL.}
%	\label{Fig:Test_acc}
%\end{figure}

Fig. \ref{Fig:Test_acc} shows the test accuracy versus convergence rate of FL constraints with different schemes when $E = 10$ J.
It is observed that for all schemes except for the full scheduling scheme, the test accuracy decreases as $\epsilon$ increases.
Observed from Fig. \ref{Fig:Time_epsilon} and Fig. \ref{Fig:Test_acc}, it shows that there exists a fundamental trade-off between the mission completion time and the learning performance of FL.
However, by exploiting the mobility of the UAV and flexibility of device scheduling, our proposed joint design significantly improves the trade-off compared to the static UAV w/ HS scheme.
For example, when $\epsilon = 0.2$, the completion time of FL achieved by our proposed joint design, static UAV w/HS, and the full scheduling scheme are respectively $930$ s, and $1150$ s, $1500$ s.
Meanwhile, when $\epsilon = 0.2$, the test accuracy achieved by our proposed joint design, static UAV w/HS, and the full scheduling scheme are respectively $38.99 \%$ s, and $32.52 \%$ s, and $39.74 \%$.
For static UAV w/HS, compared to the full scheduling scheme, the completion time reduces $23.33\%$ while the test accuracy reduces $7.22\%$.
Fortunately, for our proposed joint design, compared to the full scheduling scheme, the completion time reduces $38 \%$ while the test accuracy only reduces $0.71 \%$.

\section{Conclusion}\label{Section:conclusion}
In this paper, we studied the UAV-assisted FL system to effectively address the straggler issue by exploiting the UAV's high altitude and mobility when the terrestrial PS was unavailable.
Specifically, a mobile UAV was deployed as a flying PS to exchange model parameters from distributed ground devices and train a shared FL model.
We focused on the joint consideration of device scheduling, UAV trajectory, and time allocation to minimize the completion time required for FL to converge to the desired accuracy level.
Before proceeding with problem formulation, we first theoretically analyzed the effect of device scheduling on the convergence rate of FL without the assumptions of convexity. 
We also provided a convergence bound for the average norm of the global gradient.
Based on such convergence results, we formulated the completion time minimization problem, taking into account the practical device's energy budget, UAV's mobility constraints, and convergence accuracy constraint.
Although the formulated problem was non-convex, we exploited its structures to decompose it into two sub-problems, followed by deriving closed-form solutions via the Lagrange dual ascent method.
Simulation results demonstrated that our proposed joint design can significantly reduce completion time and enhance the tradeoff between completion time and prediction accuracy compared to the existing benchmarks.

\appendix 
\subsection{Proof of Theorem \ref{theorem:convergence result}}\label{theorem:convergence result proof}
	Let ${\bm g}_{n} \triangleq \frac{\sum_{k=1}^Ka_k[n+1]D_k\nabla \mathcal{F}_k(\bm w_{n})}{\sum_{k=1}^Ka_k[n+1]D_k}$.
	Combining the update rules at devices in \eqref{Eq:local update} and the update rule at the server in \eqref{Eq:global update}, the update of the global FL model at round $n+1$ is given by
	%\begin{eqnarray}
	$	\bm w_{n+1} = \bm w_{n} - \eta \frac{\sum_{k=1}^Ka_k[n+1]D_k\nabla \mathcal{F}_k(\bm w_{n})}{\sum_{k=1}^Ka_k[n+1]D_k}
		= \bm w_{n} - \eta {\bm g}_{n}.$
	%\end{eqnarray}
	From the assumption in \eqref{Neq:Smooth}, we have that
	\begin{eqnarray}
	\mathcal{F}(\bm w_{n+1})- \mathcal{F}(\bm w_{n})
	&&	\leq  \langle\nabla \mathcal{F}(\bm w_{n}), \bm w_{n+1} - \bm w_{n}\rangle + (L/2)\|\bm w_{n+1}-\bm w_n\|_2^2, \nonumber \\
	&&= \langle\nabla \mathcal{F}(\bm w_{n}), -\eta\bm {g}_n \rangle + (\eta^2L/2) \|{\bm g}_n\|_2^2.
	\end{eqnarray}
	Using $\bm g_n = \nabla \mathcal{F}(\bm w_{n}) -(\nabla \mathcal{F}(\bm w_{n})-\bm g_n) $, we derive that
	\begin{eqnarray}
	\mathcal{F}(\bm w_{n+1})- \mathcal{F}(\bm w_{n}) &\leq & 
		(\eta L/2-1)\eta\|\nabla \mathcal{F}(\bm w_{n})\|_2^2 
		 + (\eta^2 L/2)\|\nabla \mathcal{F}(\bm w_{n})-\bm g_n\|_2^2 \nonumber \\
		&+& (1-\eta L)\eta\langle\nabla \mathcal{F}(\bm w_{n}), \nabla 
		 \mathcal{F}(\bm w_{n})-\bm {g}_n \rangle. \label{Neq:loss diff} 
	\end{eqnarray}
	The third term at the right-hand side in \eqref{Neq:loss diff} can be upper bounded as
	\begin{eqnarray}
		 \langle\nabla \mathcal{F}(\bm w_{n}), \nabla \mathcal{F}(\bm w_{n})-\bm {g}_n \rangle &\leq& \frac{1}{2}(\|\nabla \mathcal{F}(\bm w_{n})\|_2^2  
		+ \|\nabla \mathcal{F}(\bm w_{n})-\bm {g}_n \|_2^2).\label{Neq:inner product}
	\end{eqnarray}
	Given $0<\eta \leq \frac{1}{L}$, we substitute \eqref{Neq:inner product} into \eqref{Neq:loss diff} and 
	rearrange the result, yielding
	\begin{eqnarray} \label{Neq:11}
		\|\nabla \mathcal{F}(\bm w_{n})\|_2^2  
		\leq \frac{2}{\eta} (\mathcal{F}(\bm w_{n})-\mathcal{F}(\bm w_{n+1})) + \|\nabla \mathcal{F}(\bm w_{n})-\bm g_n\|_2^2. 
	\end{eqnarray}
Therein, the second term , namely the global gradient deviation error in expression \eqref{Neq:11} is can be rewritten as 
\begin{eqnarray} \label{Neq:error1}
	&&\!\!\!\!\!\!	\|\nabla \mathcal{F}(\bm w_{n})-\bm g_n\|_2^2
	=   \Big\|\frac{\sum_{k=1}^Ka_k[n+1]D_k\nabla \mathcal{F}_k(\bm w_{n})}{\sum_{k=1}^Ka_k[n+1]D_k}  -\frac{1}{D}\sum_{k=1}^KD_k\nabla \mathcal{F}_k(\bm w_{n})\Big\|_2^2 \nonumber \\
\!\!\!\!	= && \!\!\!\! \Big(\Big\|\frac{\sum_{k=1}^K\!D_k(1\!-\!a_k[n\!+\!1])}{D\!\sum_{k=1}^K\!a_k[n\!+\!1]D_k}\sum_{k=1}^Ka_k[n\!+\!1]D_k\nabla \mathcal{F}_k(\bm w_{n})\|
	\!+\!\|\frac{1}{D}\!\sum_{k=1}^K(1-a_k[n+1])D_k\nabla \mathcal{F}_k(\bm w_{n})\Big\|\Big)^2. \nonumber\\
\end{eqnarray}	
Thus, based on \eqref{Neq:error1} and triangle inequality, the global gradient deviation error is bounded as
	\begin{eqnarray} \label{Neq:error}
	\!\!\!\!	\|\nabla \mathcal{F}(\bm w_{n})-\bm g_n\|_2^2
%		= && \!\!\!\! \Big\|\frac{\sum_{k=1}^KD_k(1-a_k[n+1])}{D\sum_{k=1}^Ka_k[n+1]}\sum_{k=1}^Ka_k[n+1]D_k\nabla \mathcal{F}_k(\bm w_{n}) 	
%		- \frac{1}{D}\sum_{k=1}^K(1-a_k[n+1])D_k\nabla \mathcal{F}_k(\bm w_{n})\Big\|^2\nonumber\\
		 && \leq   \Big(\!\frac{\sum_{k=1}^K\!D_k(1\!-\!a_k[n\!+\!1])}{D\sum_{k=1}^Ka_k[n\!+\!1]D_k}\!\sum_{k=1}^K\!a_k[n\!+\!1]\!\sum_{i = 1}^{D_k}\! \|\nabla\! f(\bm w_n;\bm x_{ki}, y_{ki})\|_2\nonumber\\
		&& + \frac{1}{D}\sum_{k=1}^K(1-a_k[n+1])\sum_{i=1}^{D_k} \|\nabla f(\bm w_n;\bm x_{ki}, y_{ki})\|_2\Big)^2\nonumber\\
		 \leq &&\!\!\!\! \frac{4}{D^2}(\sum_{k=1}^K(1-a_k[n+1])D_k)^2\kappa
		\leq  \frac{4K\kappa }{D^2}\sum_{k=1}^K(1-a_k[n+1])D_k^2.	
	\end{eqnarray}	
Therefore, based on \eqref{Neq:11} and \eqref{Neq:error}, 
	it follows that
	\begin{eqnarray}
	\frac{1}{N}\sum_{n=0}^{N-1}\|\nabla \mathcal{F}(\bm w_{n})\|^2 
		\leq&&\!\!\!\! \frac{2}{N\eta}(\mathcal{F}(\bm w_0)- F(\bm w_{N})) + \frac{1}{N}\sum_{n=0}^{N-1}\|\nabla \mathcal{F}(\bm w_{n})-\bm g_n\|^2\nonumber\\
		\leq&&\!\!\!\! \frac{2}{N\eta}(\mathcal{F}(\bm w_0)- F^{\star}) + \frac{1}{N}\sum_{n=0}^{N-1}\|\nabla \mathcal{F}(\bm w_{n})-\bm g_n\|^2 \nonumber\\
%		\leq&&\!\!\!\! \frac{2}{N\eta}(\mathcal{F}(\bm w_0)- F^{\star}) + \frac{4\kappa}{ND^2}\sum_{n=0}^{N-1}\Big(\sum_{k=1}^K(1-a_k[n+1])D_k\Big)^2  \nonumber \\
	\overset{\eqref{Neq:error}}	\leq&&\!\!\!\! \frac{2}{N\eta}(\mathcal{F}(\bm w_0)- F^{\star}) +  \frac{4K\kappa}{ND^2}\sum_{n=0}^{N-1}\sum_{k=1}^K(1-a_k[n+1])D_k^2. 
	\end{eqnarray}
	This completes the proof.

\vspace{-5mm}
\subsection{Proof of Proposition \ref{proposition:ProblemFeasibility}} \label{Proposition:ProblemFeasibility proof}
	Given an arbitrary convergence threshold $\epsilon$, to make constraint \eqref{Cons:F accuracy} in problem \eqref{Problem: original} feasible, if and only if the minimum communication rounds	satisfies
	$
		N \geq \lceil\frac{2(\mathcal{F}(\bm w_{0})-F^{\star})}{\epsilon \eta} \rceil,
	$
	where $a_k[n] = 1, \forall k, \forall N$.	
	In addition, we have that
	\begin{eqnarray}\label{Neq:1}	
		&&\frac{4K\kappa}{ND^2}\sum_{n=0}^{N-1}\sum_{k=1}^K(1-a_k[n+1])D_k^2 
		\leq\!\frac{4K\kappa}{ND^2}\max_kD_k^2\sum_{n=0}^{N-1}\sum_{k=1}^K(1\!-\!a_k[n\!+\!1]). 
	\end{eqnarray}	
By instituting the right term of  inequality \eqref{Neq:1} into  constraint \eqref{Cons:F accuracy},	if  \eqref{Cons:F accuracy} is satisfied, then
	\begin{eqnarray}\label{Neq: scheduled devices}	
		\sum_{n=0}^{N-1}\sum_{k=1}^Ka_k[n+1] \geq	|\mathcal{S}|,
	\end{eqnarray}		
where $	|\mathcal{S}| =\lceil KN-(\epsilon-\frac{2}{N\eta}(\mathcal{F}(\bm w_0)-F^{\star}))\frac{ND^2}{4K\kappa \max_kD_k^2}\rceil$.
For device $k$, the minimum energy consumption for local model uploading is given by
	\begin{eqnarray}
	\!\!\!\!\!\!	\lim_{\tau_k[n] \longrightarrow +\infty}	E_k^{\text{comm}}[n] &=&  \lim_{\tau_k[n] \longrightarrow +\infty}\frac{\tau_k[n]\sigma^2}{h_k[n]}\Big(2^{\frac{a_k[n]s}{B\tau_k[n]}}-1\Big) 
		 =  \frac{ a_k[n]s\sigma^2\ln 2}{Bh_k[n]}
		\geq \frac{ a_k[n]s\sigma^2H^2\ln 2}{B\beta_0}.
	\end{eqnarray}		
Besides, if constraints \eqref{Cons:E budget} are satisfied, we have that
	\begin{eqnarray}\label{Neq: 21}
		\frac{s\sigma^2H^2\ln 2}{B\beta_0}\sum_{k=1}^{K}\sum_{n=1}^{N} a_k[n]&\leq& \sum_{k=1}^{K}\sum_{n=1}^{N}\Big(E_k^{\text{comm}}[n] +   E_{k}^{\text{comp}}[n]\Big) 
		 \leq \sum_{k=1}^{K}{E}_{k}.
	\end{eqnarray}	
	Combining \eqref{Neq: scheduled devices} with \eqref{Neq: 21}, to make constraints \eqref{Cons:E budget} and constraint \eqref{Cons:F accuracy} feasible, we have that
	\begin{eqnarray}\label{Neq: 2}
	&&	\frac{s\sigma^2H^2\ln 2}{B\beta_0} 	|\mathcal{S}|  \leq  \sum_{k=1}^{K}{E}_{k}, 
		N \geq \lceil\frac{2(\mathcal{F}(\bm w_{0})-F^{\star})}{\epsilon \eta} \rceil.
	\end{eqnarray}		
Thus, this completes the proof.

\vspace{-2mm}
\subsection{Proof of Theorem \ref{theorem:ProblemOriginal}} \label{theorem:ProblemOriginal proof}
	Denote the optimal solutions of problem \eqref{Problem: original} with $\epsilon^{\star}$	and $\hat{\epsilon}$ by with $\mathcal{E}^{\star} = \{\bm A^{\star}, \bm \Gamma^{\star}, \bm Q^{\star}, \bm \delta^{\star}\}$ and $\hat{\mathcal{E}} = \{ \bm {\hat {A}}, \bm {\hat{\Gamma}}, \bm {\hat{Q}} , \bm {\hat{ \delta}}\}$, respectively. 
	To prove Theorem \ref{theorem:ProblemOriginal}, we only need to show that $\sum \delta^{\star}[n] \geq \sum \hat{ \delta}[n]$ holds when $\epsilon^{\star} \leq \hat{\epsilon}$. Note that in problem \eqref{Problem: original},  the convergence threshold is only involved in constraint \eqref{Cons:F accuracy}. 
	Thus, we have the following inequalities
	\begin{eqnarray}
		&&\frac{2}{N\eta}(\mathcal{F}(\bm w_{0})-F^{\star})
		+\frac{4K\kappa}{ND^2}\sum_{n=0}^{N-1}\sum_{k=1}^K(1-a_k^{\star}[n+1])D_k^2 \leq \epsilon^{\star}	\leq \hat{\epsilon},
	\end{eqnarray}	
	which implies that $\mathcal{E}^{\star}$ is also a feasible solution of problem  \eqref{Problem: original} with $\hat{\epsilon}$. Since $\sum{\hat{ \delta}}[n]$ is the minimum objective value	of problem \eqref{Problem: original} with $\hat{\epsilon}$, it follows that $\sum \delta^{\star}[n] \geq \sum \hat{ \delta}[n]$, which thus completes the proof of Theorem \ref{theorem:ProblemOriginal}.

\subsection{Proof of Theorem \ref{theorem:tau-solution}} \label{theorem:tau-solution proof}
%Let $f(\bm A,\bm \tau)$ denote the objective function of problem \eqref{Subproblem:Primal A and tau} with respect to $\bm \tau$, where
%	\begin{eqnarray}
%\!\!\!\!\!\!f(\bm A,\bm \tau) \!=\!\! 	\sum_{n=1}^N\!\! \sum_{k=1}^{K}\!\mu_{k}[n]\!\Big(\!\sum_{k=1}^{K} \tau_{k}[n]  \!\!+\! \!  a_{k}[n]t_k^{\text{comp}}\!\Big)	
%	\!\! + \!\!\sum_{k=1}^K\! \lambda_k 
%		\!\!\sum_{n=1}^{N}\!\!\Big(\!\frac{\tau_k[n] \sigma^2}{h_k[n]}\!\Big(\!2^{\frac{a_{k}[n]s}{B\tau_k[n]}}\!-\!1\!\Big)\!+ \!  a_{k}[n]E_{k}^{\text{comp}}[n]\!\Big), 
%	\end{eqnarray}
Let the first derivative of the objective function of problem \eqref{Subproblem:Primal A and tau} w.r.t. $\tau_k[n]$ be zeros, i.e.,
\begin{eqnarray}\label{Eq:derivative-tau}
\!\!\!\!\sum_{k=1}^{K} \!\mu_{k}[n]  \!+\! \lambda_k \frac{ \sigma^2}{h_k[n]}\Big(\!2^{\frac{a_{k}[n]s}{B\tau_k[n]}}\!-\!\frac{a_{k}[n]s\ln2}{B\tau_k[n]}2^{\frac{a_{k}[n]s}{B\tau_k[n]}}\!-\!1\!\Big) \!= \!0.
\end{eqnarray}
We can rewritten  equation \eqref{Eq:derivative-tau} as
\begin{eqnarray}
\Big(\frac{a_{k}[n]s\ln2}{B\tau_k[n]}-1\Big) 2^{\frac{a_{k}[n]s}{B\tau_k[n]}}=    \frac{h_k[n]\sum_{k=1}^{K} \mu_{k}[n]} { \lambda_k \sigma^2}-1.
\end{eqnarray}
By solving the above equation, one can have
$
	\frac{a_{k}[n]s\ln2}{B\tau_k[n]}-1 =  \mathcal{W}\Big( \frac{h_k[n]\sum_{k=1}^{K} \mu_{k}[n]} { e\lambda_k \sigma^2}-\frac{1}{e}\Big).
$
By simplifying the above equation, we can have that
\begin{eqnarray}
\tau_{k}[n] = \frac{a_{k}[n]s\ln2}{ B\Big(\mathcal{W}\Big( \frac{h_k[n]\sum_{k=1}^{K} \mu_{k}[n]} { e\lambda_k \sigma^2}-\frac{1}{e}\Big)+1\Big)}.
\end{eqnarray}
Thus, this completes the proof.
%\subsection{Proof of Corollary \ref{Corollary:time-channel}}\label{Proof:Corollary:time-channel}
%\begin{proof}
%	Denoting $x = \frac{h_k[n]\sum_{k=1}^{K}\mu_{k}[n]-\lambda_k \sigma^2}{\lambda_k \sigma^2e}$, we have that 
%	\begin{eqnarray}\label{Eq:channel gain}
%		h_k[n] = 	 \frac{\lambda_k \sigma^2e}{\sum_{k=1}^{K}\mu_{k}[n]}(x+\frac{1}{e}),
%	\end{eqnarray}
%	By substituting \eqref{Eq:channel gain} into the expression for $\tau_{k}^{\star}[n]$ , it follows that
%	\begin{eqnarray}
%		\tau_{k}^{\star}[n] &=& a_k[n]\Bigg[ \frac{s\ln2}{B(1+\mathcal{W}(\frac{h_k[n]\sum_{k=1}^{K}\mu_{k}[n]-\lambda_k \sigma^2}{\lambda_k \sigma^2e}))}\Bigg]^+, \nonumber \\
%		&=&  \frac{a_k[n]s\ln2}{B}\Bigg[ \frac{1}{1+\mathcal{W}(x)}\Bigg]^+,
%	\end{eqnarray}
%	Further, we let
%	\begin{eqnarray}
%		z = \Bigg[ \frac{1}{1+\mathcal{W}(x)}\Bigg]^+.
%	\end{eqnarray}
%	It is evident that $z$ is non-increasing with respect to $\mathcal{W}(x)$. Since $\mathcal{W}(x)$ is non-decreasing with respect to $x$ and $x(h_k[n])$ is non-decreasing with respect to $h_k[n]$, we can conclude that $\tau_{k}^{\star}[n]$ is non-increasing with respect to $h_k[n]$. Thus, this completes the proof.
%\end{proof}

\subsection{Proof of Proposition \ref{convergence proposition}}\label{convergence proposition proof}
We denote  $T^{\star} (\bm A^{\star}, \bm \Gamma^{\star},  \bm \delta^{\star}, \bm Q^{\star}) =  \sum_{n=1}^N\delta^{\star}[n]$  as the objective value of problem \eqref{Problem: original} with solution  $\{\bm A^{\star}, \bm \Gamma^{\star},  \bm \delta^{\star}, \bm Q^{\star}\}$.
As shown in step 5 of Algorithm \ref{Algo:main}, a feasible solution of problem \eqref{Problem:Q-feasibility} (i.e., $\{ \bm A^{r}, \bm \Gamma^{r},  \bm \delta^{r}, \bm Q^{r} \}$) is also feasible to problem \eqref{Subproblem:scheduling and time}.
We denote $\{ \bm A^{r}, \bm \Gamma^{r},  \bm \delta^{r}, \bm Q^{r} \}$ and $\{ \bm A^{r +1}, \bm \Gamma^{r+1},  \bm \delta^{r+1}, \bm Q^{r+1} \}$ as a  feasible solution of   problem \eqref{Problem: original} at the $r$-th and $(r+1)$-th iterations, respectively. 

Because, given $\bm Q^{r}$ as  shown in step 4 of Algorithm \ref{Algo:main}, $\{ \bm A^{r +1}, \bm \Gamma^{r+1},  \bm \delta^{r+1}\}$ is the optimal solution to problem \eqref{Subproblem:scheduling and time}, we have
%\vspace{-2mm}
\begin{eqnarray} \label{p1}
	T(\bm A^{r}, \bm \Gamma^{r},  \bm \delta^{r}, \bm Q^{r})  \geq T (\bm A^{r+1}, \bm \Gamma^{r+1},  \bm \delta^{r+1}, \bm Q^{r}) . 	
\end{eqnarray}
Similarly, given $\bm A^{r+1}, \bm \Gamma^{r+1},  \bm \delta^{r+1}$ as  shown in step 5 of Algorithm  \ref{Algo:main}, $\bm Q^{r+1}$ is the optimal solution to problem \eqref{Subproblem:Q-objective}, it follows that
\begin{eqnarray}\label{p3}
\!\!\!\!T (\bm A^{r+1}, \bm \Gamma^{r+1},  \bm \delta^{r+1}, \bm Q^{r}) \!=\! T (\bm A^{r+1}, \bm \Gamma^{r+1},  \bm \delta^{r+1}, \bm Q^{r+1}). 	
\end{eqnarray}
This holds because the original objective function $T$ is independent of $\bm Q$ but depends on $\bm \delta$.
Based on $\eqref{p1}$ and $\eqref{p3}$, we further obtain 
\begin{eqnarray}
	T (\bm A^{r+1}, \bm \Gamma^{r+1},  \bm \delta^{r+1}, \bm Q^{r+1}) \leq T (\bm A^{r}, \bm \Gamma^{r},  \bm \delta^{r}, \bm Q^{r}), 	
\end{eqnarray}
which shows that the objective value of problem \eqref{Problem: original} is always decreasing over iterations. 
Therefore, the proposed BCD-LD algorithm converges. This thus completes the proof.

\bibliography{Reference} 
\bibliographystyle{ieeetr}

\end{document}